\documentclass{amsart}

\usepackage{lmodern}
\usepackage[T1]{fontenc}
\usepackage[english]{babel}

\usepackage[a4paper,top=4cm,bottom=2cm,left=4.25cm,right=4.25cm,marginparwidth=1.75cm]{geometry}

\usepackage{algorithm}
\usepackage{algpseudocode}

\usepackage{amsmath}
\usepackage{amssymb}
\usepackage{amsthm}
\usepackage{mathabx}
\usepackage{graphicx}
\usepackage{subcaption}

\usepackage{pgf}
\usepackage{tikz-cd}
\usepackage[foot]{amsaddr}

\usepackage{float}

\usepackage[section]{placeins}

\usepackage[normalem]{ulem} 

\usepackage[backend=biber,style=numeric-comp,doi=true,url=false,language=english]{biblatex}

\DeclareSourcemap{
  \maps[datatype=bibtex]{
    \map[overwrite]{
      \step[fieldsource=doi, final]
      \step[fieldset=url, null]
      \step[fieldset=eprint, null]
    }
    \map[overwrite=true]{
      \step[fieldset=language, null]
      \step[fieldset=issn, null]
      \step[fieldset=isbn, null]
    }
  }
}

\DeclareMathOperator{\NV}{NV}
\DeclareMathOperator{\SCC}{SCC}

\usepackage{csquotes}

\usepackage[colorlinks=true,allcolors=blue]{hyperref}
\usepackage{url}
\usepackage{doi}

\usepackage{orcidlink}

\newtheorem{theorem}{Theorem}[section]

\theoremstyle{definition}

\author[W.\ Jaworek]{Wojciech Jaworek$^{1,\ast,\orcidlink{0009-0002-8118-451X}}$}
\address{$^1$ Faculty of Applied Physics and  Mathematics,
Gda\'{n}sk University of Technology,
ul.\ Narutowicza 11/12, 80-233 Gda\'{n}sk,
Poland
}
\address{$^\ast$ Corresponding author}

\author[P.\ Pilarczyk]{Pawe\l{} Pilarczyk$^{1,2,\orcidlink{0000-0003-0597-697X}}$}
\address{$^2$ Digital Technologies Center,
Gda\'{n}sk University of Technology,
ul.\ Narutowicza 11/12, 80-233 Gda\'{n}sk,
Poland
}

\title[Chaotic Itinerancy]{Set-Oriented Approach to the Analysis of Chaotic Itinerancy}

\keywords{Chaos, chaotic itinerancy, dynamical system, numerical methods, Markov chains}

\subjclass{37D45, 39A33, 68U99}

\begin{document}

\begin{abstract}
Chaotic itinerancy (CI), brought to attention, among others, by K.~Ikeda, I.~Tsuda and K.~Kaneko in the early 1990s, is a phenomenon in which trajectories in a dynamical system experience periods of ordered motion near quasi-attractors interspersed with chaotic transitions between them.
Possible maps in which CI was found include coupled map lattices (CML) and globally coupled one-dimensional chaotic maps (GCM). We study such maps using numerical methods and graph algorithms.
Specifically, we partition the state space into a finite grid of compact subsets, and we represent the map using a multivalued mapping of grid elements. This mapping can be perceived as a directed graph, with grid elements as vertices and individual mappings between them as weighted edges. This setup provides a coarse view of global dynamics and opens the opportunity for using Markov chains and efficient graph algorithms to study dynamical features. In particular, invariant sets can be found by computing strongly connected components in the graph. Analysis of the transition matrix of the graph makes it possible to find its stationary distribution and to compute local entropy as a measure of expansion or instability in the system. Using these tools, we propose an algorithm for assessing whether a certain map possesses the CI property and show its application to dynamical systems: a globally coupled system of logistic maps and a variant of a CML system for which we conduct computations for a large range of parameters.
\end{abstract}

\maketitle

\noindent
\textbf{Chaotic itinerancy (CI) is a dynamical phenomenon on the verge between chaos and order. It may emerge in a dynamical system when some attractors lose their stability (and become so-called attractor ruins) but still keep enough strength to make trajectories stay in their vicinity for prolonged periods of time. Such trajectories wander chaotically in the phase space between visiting attractor ruins, and the observed dynamics is characterized by alternating periods of ordered and chaotic behavior. Despite its applications in neuroscience and artificial intelligence, the vast majority of methods for the investigation of this phenomenon focus on visual assessment of the results of numerical simulations of individual trajectories. We make an attempt to fill this gap and propose a set-oriented algorithmic numerical method for the analysis of global dynamics across the entire phase space at finite resolution, aimed at the detection and analysis of the CI phenomenon. The method is based on efficient graph algorithms and utilizes certain notions from the theory of Markov chains. The method allows one to find the location of attractor ruins in the phase space and provides quantitative assessment of selected features of CI.}

\section{Introduction}
\label{sec:intro}

Chaotic itinerancy (CI) is a form of high-dimensional deterministic dynamics in which an orbit itinerates among quasi-stationary, low-dimensional ordered states, called \emph{quasi-attractors} or \emph{attractor ruins}, through episodes of high-dimensional chaos. Each ordered phase corresponds to motion near a quasi-stable subset of the state space, where most directions are attracting. Such a set resembles an attractor, so it is called a quasi-attractor, and could have emerged through a bifurcation in which a true attractor (an asymptotically stable set) has become unstable, which justifies calling it an attractor ruin, especially that just after the bifurcation the instability is still weak. Nevertheless, because at least one direction is unstable, the trajectory eventually escapes, wanders chaotically throughout other regions of the phase space, and is then re-captured by the stable manifold of the same or another attractor ruin.  The process repeats, giving epochs of locally stable behavior interrupted by chaotic transitions. This process is reflected by large fluctuations and slow convergence of Lyapunov exponents \cite{Tsuda2003-dx} and by the prevalence of Milnor (non-asymptotically stable) attractors that nevertheless exert global attraction \cite{Milnor1985-xp}.

\subsection{Importance of the chaotic itinerancy phenomenon}
\label{sec:importance}

The concept of chaotic itinerancy was brought to attention, among others, by Ikeda \cite{Ikeda1989-zg}, Tsuda \cite{Tsuda1991-pf} and Kaneko \cite{Kaneko1990-fj}; see also \cite{Tsuda2013} for a comprehensive overview. The topic finds its most important applications in the field of neuroscience, to get deeper understanding of cognitive flexibility, associative memory, and perception, where the brain spontaneously transitions between quasi-stable states of activity \cite{Freeman1987-rm, Freeman2003-ch}. It is also a powerful tool in artificial intelligence and robotics, where it can be used to model spontaneous behavior \cite{Inoue2020-hg}.

\subsection{Mathematical understanding of chaotic itinerancy}
\label{sec:understanding}

Although the concept is relatively easy to understand on an intuitive level, several challenges arise from the mathematical point of view. Unlike conventional attractors, the quasi-stable states are not truly invariant. Their nature resists the description by standard notions of dynamical systems theory. The central problem is therefore to identify the structure of meaningful dynamical regions, such as attractor ruins and chaotic transitions connecting them, and to characterize the statistical properties of the resulting itinerant dynamics.

\subsection{State of the art}
\label{sec:state}

The phenomenon of CI has been investigated since the 1990's, yet---in contrast to the ``classic'' deterministic chaos \cite{Hirsch2012,Li_Yorke}---no strict definition for CI has been proposed so far, and no rigorous criteria for a system to possess this behavior have been stated, even though some characteristic features that can be associated to CI are listed in some papers, e.g., in \cite[Sec.~2]{Tsuda2004-mk}. The vast majority of previous research in which this phenomenon was investigated relied on numerical simulations and visual inspection of the resulting data \cite{Tsuda2003-dx, Kaneko1990-fj}. In \cite{Tsuda2003-dx}, the authors proposed the rate of convergence and fluctuations of Lyapunov exponents as a possible indication of the presence of chaotic itinerancy. In \cite{Sauer2003-xk}, a general rule for constructing such systems was introduced. In the most recent paper \cite{Mierski2025}, an automated numerical algorithm was proposed for the analysis of the CI phenomenon exhibited by a single trajectory. That method is based on the computation of entropy to notice periods of ordered and chaotic behavior, a clustering algorithm to find possible attractor ruins, and statistical tests to provide evidence of randomness of the transitions.

\subsection{Our contribution}
\label{sec:contrib}

We introduce a set-oriented method for the analysis of discrete-time dynamical systems that is aimed at the detection of the CI property. Our method is inspired by the approach to approximating the dynamics in which a finite grid of compact subsets is introduced in the state space and the dynamical system is analyzed as a mapping between these grid elements represented as a directed graph; see, e.g. \cite{Arai2009,miyaji-2016,Pilarczyk-2012,Pilarczyk2023,Falecki2026}. We extend this method to include edge weights in the directed graph, which opens new opportunities for a more in-depth analysis of the dynamics. We introduce a method for locating the attractor ruins by means of a filtration of the weighted graph. We use the PageRank algorithm \cite{Page1999} to compute the stationary distribution of the underlying Markov chain, and therefore we are able to compute the statistical properties of subsets of the finite grid. In particular, we use the local entropy of elements of the grid, the entropy rate of subsets of the grid corresponding to the detected attractor ruins and the outflow of them as the indices to quantify the CI within the system.

\subsection{Structure of the paper}
\label{sec:structure}

In Sec.~\ref{sec:CML}, we introduce the particular systems that we examine throughout the paper. Those include CML and GCM models. In Sec.~\ref{sec:sets}, we describe a process of building a finite resolution approximation of a dynamical system and a way to filter the resulting graph. Then in Sec.~\ref{sec:metrics}, we describe metrics that we compute for the analyzed system. These include eigenvalues of the transition matrix $P$ (Sec.~\ref{sec:matrices}), stationary distribution of $P$ (Sec.~\ref{sec:matrices}), local entropy and entropy rate (Sec.~\ref{sec:entropy}), strongly connected components (Sec.~\ref{sec:SCC}), outflow (Sec.~\ref{sec:SCC}), and variation of local entropy and of the density of the stationary distribution (Sec.~\ref{sec:var}). In Sec.~\ref{sec:models}, we show and discuss the results of applying the proposed framework to the analysis of selected dynamical systems described in Sec.~\ref{sec:CML}.

\section{Coupled map lattices and globally coupled maps}
\label{sec:CML}

A well-known class of systems exhibiting chaotic itinerancy are coupled map lattices (CMLs). CMLs provide a classic model for studying spatio-temporal complexity by modeling spatially extended systems as arrays of discrete-time one-dimensional dynamical systems, each with interactions with a selection of the others \cite{Kaneko1990-fj,Yanagita1995-jm}. Each individual one-dimensional system changes according to a local map, and coupling introduces interactions with neighboring maps, resulting in a combination of chaotic behavior of the individual maps and global, collective dynamics of the coupled system.

Formally, given continuous maps $f_i \colon X \to X$, with $X \subseteq \mathbb{R}$ and $i = 1, \ldots, n$, the CML of these maps is a semi-dynamical system on $X^n$ defined by
\begin{equation}\label{eq:CML}
    x_{t+1}^i = (1-\varepsilon) f_i(x_t^i) + \frac{\varepsilon}{|N(i)|} \sum_{j \in N(i)} f_j(x_t^j),
\end{equation}
where $x_t^i$ denotes the state of the $i$-th component at time $t$, the number $\varepsilon \in [0,1]$ is the coupling strength, and $N(i)$ denotes the set of indices of the components that influence the $i$-th component. The structure of $N(i)$ encodes the spatial complexity of the lattice, ranging from a few nearest neighbors of $i$ only to the entire set $\{1, \ldots, n\}$. Maps $f_i$ are typically taken as the logistic map $f(x)=1-ax^2$, and the type of dynamics is investigated as a function of the parameter $a$ and the coupling strength $\varepsilon$; see e.g.\ \cite{Kaneko1990-fj}.

The interplay between local chaotic dynamics and spatial coupling gives rise to a considerable range of different phenomena, including synchronization, pattern formation, spatiotemporal chaos, and itinerant motion between quasi-stable states; the last one is of our interest in this research. Of particular importance is the case where $N(i)$ contains all dimensions, leading to the globally coupled maps (GCM) model. This model was introduced and researched by Kaneko as a canonical model for high-dimensional dynamics in which he observed clustering phenomena and chaotic itinerancy \cite{Kaneko1991-fh,Kaneko1993-uy,Kaneko2003-oy,Kaneko2015-id}.

In the GCM model, chaotic itinerancy (CI) appears in terms of interleaving phases of synchronization and desynchronization between the one-dimensional components of the system. Subsets of the components join into clusters during a coherent phase, which is then broken by sudden reorganization caused by temporary desynchronization. We say that the CI appears in the system if the two phases are balanced over a long period of time.

Later studies showed that CI was not only present in the GCM model but was also observed in a broader class of CMLs. These systems include models of convection \cite{Yanagita1995-jm}, coupled logistic and Gaussian maps \cite{Tanaka2005-if,Kobayashi2018-yv}, and hybrid models linking physical bodies and neural oscillators \cite{Park2017-dl,Inoue2020-hg}.

In the sense explained above, CMLs and GCMs are fundamental models for understanding the CI phenomenon, where it emerges when balance is attained between local chaos and global coupling, allowing insight into the mechanisms of instability of attractor ruins and the structure of itinerant switching between them.

We shall focus on the following CML model~\cite{Tsuda2003-dx} built from one-dimensional maps of the circle defined as
\begin{equation}
\label{eq:tsudaMap}
f(x) = x-\omega \cos(2\pi x)+\omega \quad (\text{mod} \ 1)
\end{equation}
with a somewhat non-standard coupling given by
\begin{equation}
\label{eq:tsudaCML}
x^i_{t+1} = f(x^i_t) + \varepsilon \left(\sin(2\pi x^{i-1}_t) - 2\sin(2\pi x^{i}_t) + \sin(2\pi x^{i+1}_t) \right),
\end{equation}
where $i\in \mathbb{Z}_n$ is the index of a specific one-dimensional map, $\varepsilon$ is the coupling strength, and $\omega \in [0,1]$ is a nonlinearity parameter in $f$. This CML system was studied in \cite{Tsuda2003-dx} with $n=5$, $\omega=0.5$, and $\varepsilon = 0.08885$. In order to work with a lower-dimensional system, we fix $n=3$ and $\omega=0.5$ for this study, and we choose $\varepsilon=0.105958$, a coupling strength for which we empirically observed symptoms of CI in numerical simulations by examining trajectories in the phase space. Note that for $n=3$, in fact we have a system somewhat similar to a GCM system because each map is coupled with both others. With our choice of $\varepsilon$, equation (3) implies that $x_{t+1}^{i} \in [-0.5,1.5]$, because $f(x_t^i) \in [0,1]$ and $|\sin(\cdot)| \leq 1$; therefore, in what follows, we shall analyze the dynamics generated by this system on $[-0.5,1.5]^3$.

The advantage of choosing this specific 3-dimensional system instead of a typical GCM or CML model based on the logistic map is that one can already observe here the chaotic intinerancy phenomenon with a few attractor ruins in spite of the low dimension of the system. Such a low dimension helps considerably to apply our approach in which the phase space is subdivided into a uniform grid. This approach in higher dimensions is considerably more time- and memory-consuming.

\section{A coarse-grained approach to the analysis of dynamics}
\label{sec:sets}

In this section, we introduce a framework for analyzing dynamics of a given discrete-time semi-dynamical system (called a \emph{dynamical system} for short) induced by a continuous map $f\colon X\to X$. Instead of focusing on individual trajectories and their analytical properties, we partition the state space $X$ into a finite grid $\mathcal{X}$ of convex and compact subsets (such as $n$-dimensional cubes if $X \subset \mathbb{R}^n$) and define a multi-valued mapping $\mathcal{F}\colon \mathcal{X} \multimap \mathcal{X}$, perceived as a directed graph, to represent $f$. This approach is inspired by earlier work such as \cite{Arai2009,Luzzatto2011-us,Pilarczyk2023}. Now we extend it to include graphs with weighted edges.

Although the theoretical and algorithmic framework is designed to work with a general form of a grid, in order to avoid technical difficulties (especially in applications), we focus on cubical grids on a bounded subset of $\mathbb{R}^n$ and additionally assume that each grid element has the same volume. Note that by simple identification, this framework is also directly applicable to an $n$-torus perceived as the Cartesian product of $n$ circles.

Let the state space $X$ be an $n$-dimensional hypercube (or actually a cuboid) defined as the Cartesian product of intervals: $X = \prod_{i=1}^n [a_i, b_i]$. We divide each interval $[a_i, b_i]$ into $d_i$ subintervals of length $\delta_i = (b_i - a_i)/d_i$. The state space of the dynamical system that we analyze is thus discretized into a collection of $n$-dimensional cubes (or actually small cuboids, but we shall call them \emph{cubes} for short), where each cube is defined as $Q = \prod_{i=1}^n I_i$, where each $I_i$ is a subinterval of $[a_i, b_i]$ of length $\delta_i$ coming from the subdivision. This collection of cubes forms a discretized domain~$\mathcal{X}$. 

\subsection{Representing a map by a directed graph}
\label{sec:graph}

Given a grid $\mathcal{X}$ (either the cubical grid described above or another kind of grid), we construct a multi-valued map $\mathcal{F}$ to serve as a coarse-grained representation of the map $f$ that generates the dynamical system under consideration. For each $Q \in \mathcal{X}$, we uniformly sample a set $S(Q)$ of $N$ points from $Q$ and apply the map $f$ to each of the points to determine the set $\mathcal{F}(Q)$ as follows:
\begin{equation}
\mathcal{F} \colon \mathcal{X} \ni Q \mapsto \{R \in \mathcal{X} : f(S(Q)) \cap R \neq \emptyset\} \subset \mathcal{X}.
\end{equation}
Then we represent the multi-valued map $\mathcal{F}\colon \mathcal{X}\multimap \mathcal{X}$ by means of a directed graph $G=(V, E)$ as follows:
\begin{eqnarray}
V & = & \mathcal{X}, \nonumber \\
E & = & \{(Q,R) : Q,R \in \mathcal{X} \text{ and }  R \in \mathcal{F}(Q)\}. \nonumber
\end{eqnarray}
In this way, directed edges of $G$ reflect the mapping $\mathcal{F}$.

This framework proved to be effective in some previous applications. For example, strongly connected path components of a graph constructed similarly to our case were used in \cite{Arai2009} to find isolating neighborhoods, the shortest paths in such components were used in \cite{Pilarczyk2023} to quantify the recurrence time within such isolating neighborhoods, and global properties of the graph were used in \cite{Luzzatto2011-us} to prove that the system was transitive or mixing in a finite-resolution sense.

\subsection{Adding weights to graph edges}
\label{sec:weights}

We additionally introduce edge weights $w \colon E \to \mathbb{R}$ to record the volume of intersection of $f(Q)$ with $R$, where $E = (Q,R)$, effectively measured by the fraction of sample points mapping from $Q$ to $R$:
\begin{equation}\label{eq:weights}
w(e) = \text{card} (f(S(Q)) \cap R) / N, \text{ where } e = (Q,R) \in E.
\end{equation}
This idea is illustrated in Fig.~\ref{fig:mvmap_digraph}.

\begin{figure}[htbp]
    \centering
    \begin{minipage}{0.49\textwidth}
        \raggedleft
        \[\begin{tikzcd}[ampersand replacement=\&,cramped]
	{Q_k} \& {Q_j} \\
	{Q_l} \& {Q_i} \\
	{Q_m} \& {Q_n}
	\arrow["{{0.32}}"'{pos=0.8}, from=2-2, to=1-1]
	\arrow["{{0.008}}"', from=2-2, to=1-2]
	\arrow["{{0.552}}"'{pos=0.6}, from=2-2, to=2-1]
	\arrow["{{0.04}}"'{pos=0.9}, from=2-2, to=3-1]
	\arrow["{{0.08}}", from=2-2, to=3-2]
\end{tikzcd}
\]
    \end{minipage}
    \hfill
    \begin{minipage}{0.49\textwidth}
        \includegraphics[width=0.8\linewidth]{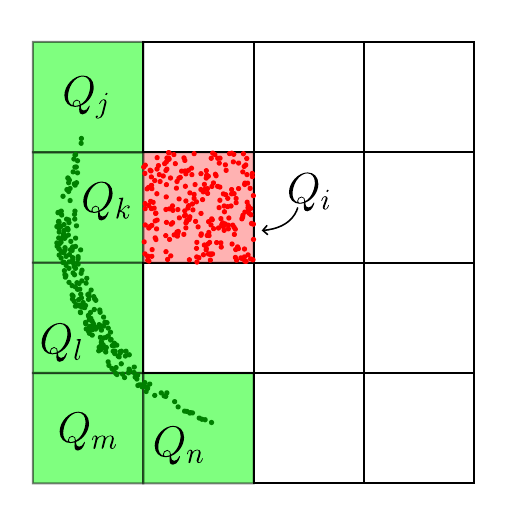}
    \end{minipage}
    \caption{Representing a multi-valued map on grid elements by means of a directed graph. The combinatorial image of the cube $Q_i$ (colored in red) consists of all cubes that contain the images of points sampled from $Q_i$ (colored in green). Edges indicate the mapping, and their weights were calculated using formula \eqref{eq:weights}.
    }
    \label{fig:mvmap_digraph}
\end{figure}

In what follows, we shall use the symbol $\mathcal{F}$ to refer to the multi-valued mapping and its associated graph as well.

The directed weighted graph $\mathcal{F}$ can be encoded in a row-stochastic transition matrix $P\in \mathbb{R}^{|V|\times |V|}$. For this purpose, assume $V = \{Q_1, \ldots, Q_{|V|}\}$, and define
\begin{equation}
\label{eq:P}
P_{ij} = 
\begin{cases}
w(e) & \text{if } e = (Q_i,Q_j) \in E, \\
0 & \text{otherwise}.
\end{cases}
\end{equation}

As a row-stochastic matrix, $P$ has some nice properties. In particular, it follows from the Gershgorin circle theorem that all the eigenvalues of $P$ have absolute values less than or equal to one. Note that $1$ is an eigenvalue of $P$ with the corresponding right eigenvector $\mathbf{1}$.

$P$ is usually a sparse matrix because each grid element is mapped to a limited number of other grid elements; therefore, from the computational perspective, it is sufficient to store information about its non-zero entries only.

We remark at this point that it is possible to compute the map $\mathcal{F}$ in a rigorous numerical way, for example, to ensure that $f(Q) \subset \bigcup \mathcal{F}(Q)$, and to compute rigorous lower and upper bounds on the measure of the intersection $f(Q) \cap R$ for each $R \in \mathcal{F}(Q)$ as the weights of the edges. For example, this could be achieved effectively using interval arithmetic and the procedures implemented in the CAPD software library; see \cite{Kapela2021} and \url{http://capd.ii.uj.edu.pl/}. However, this would introduce considerable technical complications and might be difficult to interpret since the difference between the lower and upper bounds might be considerable. Therefore, we choose to construct $\mathcal{F}$ by means of a numerical simulation in the spirit of Monte Carlo methods, as described above.

Representing the original dynamical system in the discretized state space $\mathcal{X}$ using the finite graph $\mathcal{F}$ provides a considerable advantage.
Namely, our framework translates questions about specific dynamical behavior of the system into problems of graph theory and linear algebra, where they can be solved with well-established and efficient algorithms and methods.
In particular, recurrent regions in the state space correspond to sets of grid elements that are connected by paths in both directions within the graph $\mathcal{F}$, called \emph{strongly connected path components} in~\cite{Arai2009}.

\subsection{Filtering the graph representation of a map}
\label{sec:filtering}

The main challenge in detecting CI using the framework we propose is the correct identification of core dynamical regions such as attractor ruins amid the complexity of the graph that represents the dynamical system. To focus on dynamically relevant transitions, we introduce the concept of filtration.

Let $\mathcal{F} = (V,E,w)$ be a weighted directed graph that represents a given dynamical system. Let $\tau\in[0,1]$. The \textit{$\tau$-filtration} of $\mathcal{F}$ is the graph $\mathcal{F_\tau} = (V_\tau, E_\tau, w_\tau)$ given by:
\begin{eqnarray}
\label{eq:filtE} E_\tau & = & \{e \in E : w(e) \geq \tau\} \\
\label{eq:filtV} V_\tau & = & \{v\in V : \exists\,u\in V \text{ such that } (u,v)\in E_\tau \text{ or } (v,u)\in E_\tau\} \\
\label{eq:filtw} w_\tau(e) & = & w(e) / \textstyle \sum \{w(e_x) : e_x=(u,x)\in E_\tau\}, \text{ where } e = (u,v) \in E_{\tau}
\end{eqnarray}

The intuition behind this definition is the following. Given a threshold $\tau \in [0,1]$, we create a subgraph of $\mathcal{F}$ in which we only keep the edges whose weights are at least $\tau$ and discard the vertices that are not endpoints of any such edge. For large values of $\tau$, the resulting filtered graph $\mathcal{F}_\tau$ highlights regions of the most important (heavy) transitions. For example, setting $\tau=1$ leaves only stable fixed points, that is, those grid elements that are entirely mapped to themselves.

Using filtration, we can simplify the structure of the graph to be analyzed. The described filtering does not remove a grid element $Q$ as long as it retains at least one incoming or outgoing edge of sufficient weight. Therefore, the choice of $\tau$ allows for the interpolation between detailed and coarse views of the system's dynamics.

We remark that Equations \eqref{eq:filtE} and \eqref{eq:filtV} cut out all non-relevant grid elements and connections (according to filtration threshold), while Equation \eqref{eq:filtw} rescales the weights to make sure that the transition matrix for the new graph is again row-stochastic.

As will be shown later, the filtration step allows us to extract collections of grid elements that correspond to attractor ruins. An important point is that we use the classic SCC algorithm for this purpose, which is very fast.

\section{Methods for the detection of the chaotic itinerancy phenomenon}
\label{sec:metrics}

While both chaotic itinerancy and deterministic chaos in the classic sense \cite{Hirsch2012, Li_Yorke} create complex and generally unpredictable trajectories, the dynamical behavior described as chaotic itinerancy is more subtle. The purpose of our work is to develop an algorithmic method to detect this behavior using a finite-resolution approximation of the dynamics. In particular, we would like to be able to differentiate chaotic itinerancy from ``classic'' chaos, in addition to distinguishing it from simple dynamics like the existence of a globally stable fixed point.

By representing the map $f$ as a weighted directed graph $\mathcal{F}$ that can be viewed as a finite-state Markov chain represented by means of a row-stochastic transition matrix $P$, we gain access to a broad range of mathematical tools. Among them are spectral, entropic, and structural quantities that allow us to distinguish between global chaos and chaotic itinerancy.

In the next subsections, we describe the following indicators and algorithms that we propose to use for the detection and quantification of the chaotic itinerancy phenomenon:
\begin{itemize}
    \item Stationary distribution of $P$: an indicator of where the trajectories spend most of the time. This distribution uncovers the location of quasi-attractors in the state space for visual inspection and is used by other indicators; see Sec.~\ref{sec:matrices}.
    \item Eigenvalues of the transition matrix $P$ of $\mathcal{F}$ and its spectral gap: indicators of stability and mixing speed for a Markov chain; see Sec.~\ref{sec:spectralgap}.
    \item Entropy and entropy rate: information-theoretic quantities that allow for the distinction of regions with ordered and chaotic dynamics in the state space; see Sec.~\ref{sec:entropy}.
    \item Strongly connected path components and outflow: graph algorithms to uncover invariant components, metastable sets (invariant sets for $\mathcal{F}_\tau$) and to measure transitions between them; see Sec.~\ref{sec:SCC}.
\end{itemize}
These quantities and methods should together provide enough information to classify the type of dynamics in a given model.

\subsection{Stationary distribution of the transition matrix}
\label{sec:matrices}

Since $P$ is a row-stochastic transition matrix for $\mathcal{F}$, we can compute its \emph{stationary distribution}, that is, a vector $\pi$ such that
\[
\pi P = \pi \quad \text{and} \quad \ \sum_i \pi_i = 1.
\]

An application of Brouwer's fixed point theorem implies that such a distribution exists. We show it here for the sake of completeness. Namely, let $\Delta \subset \mathbb{R}^n$ be the simplex of all vectors with non-negative entries that sum up to $1$. Consider the operator $\mathcal{P}\colon \Delta \ni \mu \mapsto \mu P \in \mathbb{R}^n$. Notice that $\mathcal{P}(\Delta) \subset\Delta$. Indeed, if $\sum_i \mu_i = 1$ then:
\[
    \sum_j (\mu P)_j = \sum_j\sum_i \mu_i P_{ij} = \sum_i \sum_j \mu_i P_{ij} = \sum_i \mu_i \sum_j P_{ij} = \sum_i \mu_i \cdot 1 = 1.
\]
The operator $\mathcal{P}$ is continuous and the simplex $\Delta$ is compact as a bounded and closed subset of $\mathbb{R}^n$, and it is obviously convex. Then Brouwer's fixed point theorem implies that there exists some $\pi\in\Delta$ such that $\pi = \pi P$.

However, it is not guaranteed that a stationary distribution is unique. In fact, a sufficient condition for the uniqueness of a stationary distribution is that the underlying Markov chain is irreducible and aperiodic \cite{Haggstrom_2002}.
Let us recall that a Markov chain is said to be \emph{irreducible} if every state is reachable from every other one, or, in terms of graph theory, all the vertices (states) constitute a single strongly connected component in the transition graph. We say that a state $V_i$ is \emph{periodic with period} $p$ if there exists a cycle that runs through the state $V_i$ and $p$ is the largest natural number that divides the length of every cycle that runs through $V_i$. A Markov chain is said to be \emph{aperiodic} if all its states are periodic with period $p=1$ \cite{Haggstrom_2002}.

The problem of finding a stationary distribution for a transition matrix reduces to applying an appropriate numerical algorithm, such as the well known power iteration method, to find an eigenvector of $P$ that corresponds to its largest eigenvalue $\lambda=1$. However, for the sake of effectiveness, we use a modification of the classic power iteration method inspired by the PageRank algorithm~\cite{Page1999}, introduced below as Algorithm~\ref{alg:power_iteration}. Note that the matrix $P$ is sparse, so the algorithm is fast.

\begin{algorithm}
\caption{PageRank-like Power Iteration}\label{alg:power_iteration}
\begin{algorithmic}
\State {\textbf{Input: } $P\in \mathbb{R}^{n\times n}$, $\alpha\in (0,1)$, \texttt{max\_iter} $\in \mathbb{N}$, \texttt{tol} $\in (0,1)$}
\State $\pi_0 \gets \text{vector with all entries } \frac{1}{n}$
\For{$k = 1,2, \ldots, \texttt{max\_iter}$}
\State $\hat{\pi}_k \gets (1-\alpha) \cdot (\pi_{k-1} P) + \alpha\cdot \frac{1}{n} \cdot \mathbf{1}$
\State $\pi_k \gets \frac{\hat{\pi}_k}{\lVert \hat{\pi}_k \rVert_1}$
\If{$\lVert\pi_k - \pi_{k-1}\rVert_1 < \texttt{tol}$}
\State $\text{break}$
\EndIf
\EndFor
\State \Return the last computed $\pi_k$
\end{algorithmic}
\end{algorithm}

\begin{theorem}
\label{thm:PageRank}
The sequence of vectors $(\pi_k)$ defined in Algorithm \ref{alg:power_iteration} converges to the unique left eigenvector of the matrix
\begin{equation}\label{eq:P_hat}
    \hat{P}=(1-\alpha) P + \frac{\alpha}{n} J,
\end{equation}
where $J$ is the square matrix of ones. Moreover, the matrix $\hat{P}$ is row-stochastic and all its entries are positive.
\end{theorem}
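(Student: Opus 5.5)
We first dispose of the structural claims about $\hat{P}$, and then show that the iteration is a contraction on the simplex $\Delta$.

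\emph{Step 1: properties of $\hat{P}$.} Each row of $P$ sums to $1$, and each row of $\frac{1}{n}J$ sums to $1$. Hence each row of $\hat{P}$ sums to $(1-\alpha)+\alpha=1$. Every entry satisfies $\hat{P}_{ij}=(1-\alpha)P_{ij}+\alpha/n\ge \alpha/n>0$, because $P_{ij}\ge 0$ and $\alpha\in(0,1)$. So $\hat{P}$ is row-stochastic with positive entries.

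\emph{Step 2: the iteration is $\mu\mapsto\mu\hat{P}$ on $\Delta$.} Assume $\pi_{k-1}\in\Delta$, which holds for $\pi_0$. The computation already carried out in Sec.~\ref{sec:matrices} gives $\pi_{k-1}P\in\Delta$. Then $\hat{\pi}_k=(1-\alpha)\pi_{k-1}P+\frac{\alpha}{n}\mathbf{1}$ has nonnegative entries summing to $1$, so the normalization step is the identity and $\pi_k=\hat{\pi}_k$. Since $\pi_{k-1}J=(\sum_i(\pi_{k-1})_i)\mathbf{1}=\mathbf{1}$, we get $\pi_k=\pi_{k-1}\hat{P}$. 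By induction, $\pi_k=\pi_0\hat{P}^k$.

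\emph{Step 3: contraction, which is the main step.} Take $\mu,\nu\in\Delta$. Then $\mu J=\nu J=\mathbf{1}$, so
\[
\mu\hat{P}-\nu\hat{P}=(1-\alpha)(\mu-\nu)P .
\]
For any row vector $x$ we have $\|xP\|_1\le\sum_j\sum_i|x_i|P_{ij}=\|x\|_1$, using row-stochasticity. Therefore
\[
\|\mu\hat{P}-\nu\hat{P}\|_1\le(1-\alpha)\|\mu-\nu\|_1 .
\]
Since $\Delta$ is a closed subset of $\mathbb{R}^n$, it is complete, and the Banach fixed point theorem applies to the map $\mu\mapsto\mu\hat{P}$ on $\Delta$. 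It yields a unique $\pi\in\Delta$ with $\pi\hat{P}=\pi$, and $\pi_k\to\pi$ with $\|\pi_k-\pi\|_1\le(1-\alpha)^k\|\pi_0-\pi\|_1$. In particular the stopping criterion is eventually met.

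\emph{Step 4: meaning of ``unique''.} Uniqueness must be understood up to scaling, i.e.\ as the unique left eigenvector for eigenvalue $1$ normalized to lie in $\Delta$. To see that no eigenvector of eigenvalue $1$ with mixed signs exists, take any real $x$ with $x\hat{P}=x$ and $\sum_i x_i=0$. The same estimate as in Step 3 gives $\|x\|_1=\|x\hat{P}\|_1\le(1-\alpha)\|x\|_1$, so $x=0$. Now let $y$ be any left eigenvector for eigenvalue $1$. Each row of $\hat{P}$ sums to $1$, so $\hat{P}\mathbf{1}^T=\mathbf{1}^T$ and hence $\sum_j(y\hat{P})_j=\sum_i y_i$. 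If $\sum_i y_i\ne 0$, then $y/\sum_i y_i-\pi$ is an eigenvector with zero sum, so it vanishes and $y$ is a multiple of $\pi$. If $\sum_i y_i=0$, then $y=0$ by the estimate above. Thus the eigenspace is one-dimensional. Alternatively, this follows from the Perron--Frobenius theorem, since all entries of $\hat{P}$ are positive.

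The only genuine obstacle is Step 3, which reduces to the $\ell^1$-nonexpansiveness of right multiplication by $P$ combined with the cancellation of the $J$-term on $\Delta$. Steps 1, 2 and 4 are routine bookkeeping.
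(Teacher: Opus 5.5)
Your proof is correct, but it takes a different route from the paper's.

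\textbf{The paper's route.} The paper argues spectrally. Positivity of $\hat{P}$ gives irreducibility and aperiodicity, hence a unique stationary distribution. The Perron--Frobenius theorem then gives a simple dominant eigenvalue $\lambda_1=1$ with $|\lambda_2|<1$ and a positive eigenvector, so $\pi_0$ has a nonzero component along it. Convergence is finally attributed to the general theory of power iteration, with rate governed by $|\lambda_2/\lambda_1|$.

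\textbf{Your route.} You observe that on $\Delta$ the $J$-term cancels in differences. Hence $\mu\mapsto\mu\hat{P}$ is an $\ell^1$-contraction with constant $1-\alpha$, a Doeblin/Dobrushin-type argument. Banach's fixed point theorem then gives existence, uniqueness in $\Delta$, and convergence in one stroke. Your Step~4 upgrades this to one-dimensionality of the eigenspace by the same estimate on the zero-sum subspace.

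\textbf{What your approach buys.} It is self-contained and replaces both Perron--Frobenius and the power-method theory by a two-line inequality. It makes explicit that the normalization step in Algorithm~\ref{alg:power_iteration} is the identity on $\Delta$; the paper uses this tacitly when writing $\pi J=\mathbf{1}$. It also gives a rate $\|\pi_k-\pi\|_1\le(1-\alpha)^k\|\pi_0-\pi\|_1$ that does not depend on $P$. Read spectrally, it shows $|\lambda_2(\hat{P})|\le 1-\alpha$, since every left eigenvector for an eigenvalue $\lambda\neq1$ has zero sum, which is exactly where your estimate applies. The paper only obtains $|\lambda_2|<1$, with no control in terms of $\alpha$.

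\textbf{What the paper's approach buys.} It frames the result in the eigenvalue/spectral-gap language used later in Sec.~\ref{sec:spectralgap}. It also gets positivity of the limit from Perron--Frobenius, though in your setting this is equally immediate: $\pi=\pi\hat{P}$ forces $\pi_j\ge\alpha/n$.
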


\begin{proof} 
Let us first notice that $\hat{P}$ is row-stochastic since it is a convex combination of two row-stochastic matrices. Indeed, for every $i$ we have
\[
\sum_{j=1}^n P_{ij} = 1 \quad \text{and } \quad \sum_{j=1}^n\frac{1}{n}J_{ij} = 1
\]
and thus 
\[
\forall_i \quad \sum_{j=1}^n \left((1-\alpha)P_{ij} + \frac{\alpha}{n}J_{ij}\right) = 1-\alpha+\alpha=1.
\]
Moreover, all the entries in $\hat{P}$ are strictly positive, as $\frac{1}{n}J$ itself has strictly positive entries and $P$ has non-negative entries. 

Let us now explain that in the classic power iteration method, starting with the initial vector $\pi_0 = [\frac{1}{n},\ldots,\frac{1}{n}]$, we would iteratively calculate $\pi_{k} = \frac{M\pi_{k-1}}{\lVert M\pi_{k-1}\rVert}$ to obtain a right eigenvector of a given matrix $M$. Finding a left eigenvector of $\hat{P}$, which we are interested in, is equivalent to finding the corresponding right eigenvector of the transposition of $\hat{P}$, which implies that computing $\pi_{k} = \frac{\pi_{k-1} \hat{P}}{\lVert \pi_{k-1} \hat{P}\rVert}$ instead would lead to the desired result.
However, in Algorithm~\ref{alg:power_iteration}, we iteratively calculate the vector
\begin{equation}
\label{eq:pagerank}
\hat{\pi}_{k} = (1-\alpha)\pi_{k-1} P + \alpha \frac{1}{n} \mathbf{1},
\end{equation}
where $\mathbf{1}$ is the vector of all 1's, and normalize it after each step to obtain $\pi_{k}$.
Let us show that this formula is equivalent to calculating $\hat{\pi}_k = \pi_{k-1} \hat{P}$. Indeed, 
\[
\pi \hat{P} = \pi \left( (1-\alpha)P + \frac{\alpha}{n}J\right) =(1-\alpha) \pi P + \frac{\alpha}{n} \pi J 
\]
and 
\[
\pi J = 
\begin{bmatrix}
\pi_1 & \pi_2 & \cdots & \pi_n
\end{bmatrix}
\begin{bmatrix}
1 & \cdots & 1 \\
\vdots & \ddots & \vdots \\
1 & \cdots & 1
\end{bmatrix}
= \begin{bmatrix}
    \sum \pi_i & \sum \pi_i & \ldots & \sum \pi_i
\end{bmatrix} = \mathbf{1}.
\]
With this iterative procedure, we obtain an estimate of a left eigenvector of $\hat{P}$. The matrix $\hat{P}$ is irreducible because all its entries are positive. For the same reason, the matrix $\hat{P}$ is also aperiodic, since there exists a directed edge from each state to itself with a positive weight, so the period of each state is exactly~$1$. This observation guarantees the uniqueness of a stationary distribution (left eigenvector) for $\hat{P}$.

Since all entries of the matrix $\hat{P}$ are strictly positive, the Perron-Frobenius theorem ensures that $\lambda_1$ (note that $\lambda_1 = 1$ because $\hat{P}$ is row-stochastic) is the unique largest eigenvalue of $\hat{P}$ and that the second largest in magnitude eigenvalue $\lambda_2$ of $\hat{P}$ satisfies $\lvert\lambda_2\rvert < \lvert\lambda_1\rvert = 1$. Moreover, there exists an eigenvector of $\hat{P}$ corresponding to $\lambda_1$ with all positive entries, which implies that the starting vector $\pi_0$ has a nontrivial component in the direction of this eigenvector. Recall that the convergence rate for the power iteration method is proportional to the quotient $\lvert\frac{\lambda_2}{\lambda_1}\rvert$. All this ensures the convergence in Algorithm~\ref{alg:power_iteration}.
\end{proof}

Since the vector $\pi$ obtained by Algorithm~\ref{alg:power_iteration} is calculated for a matrix that slightly differs from the true transition matrix $P$ for which we wanted to compute the stationary distribution, we assess the discrepancy in the result by calculating the $l_1$ norm of the difference between $\pi P$ and $\pi$. 

One of the reasons for using Algorithm~\ref{alg:power_iteration} instead of computing a stationary distribution for $P$ directly is that in the case of the existence of multiple attractors, a restriction to any of them would give rise to a valid stationary distribution that completely ignores the others. However, we are interested in capturing all the recurrent dynamics in a single stationary distribution. A simple schematic case of this type of dynamics is shown in Fig.~\ref{fig:reductible_graph}. For $\lambda=1$, we have three linearly independent vectors: \mbox{$\pi_1=[ 0,\ldots,0, 1]$}, \mbox{$\pi_2 = [
    \frac{1}{4}, \frac{1}{4}, \frac{1}{4}, \frac{1}{4}, 0, \ldots
]$} and \mbox{$\pi_3 =[0, \ldots, 0,\frac{1}{3}, \frac{1}{3},  \frac{1}{3}, 0]$}. With the use of Algorithm \ref{alg:power_iteration}, we get the vector $[\frac{1}{8}, \ldots, \frac{1}{8}]$ which includes all three cycles.
\begin{figure}
    \centering
\includegraphics[width=0.8\linewidth]{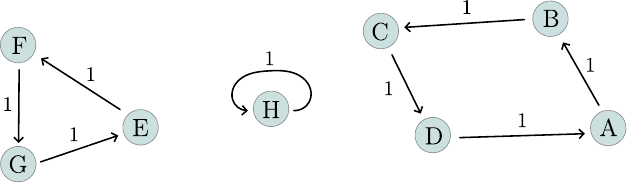}
    \caption{A reducible graph with $3$ linearly independent stationary vectors.}
    \label{fig:reductible_graph}
\end{figure}

In the finite-resolution setting, the vector $\pi$ provides a practical approximation of the time a trajectory would spend within each element of the grid. Since we expect that in the case of CI a trajectory would stay most of the time in close proximity of attractor ruins, we can directly visualize such regions by plotting $\pi$ on the grid using a color scale.

To illustrate the usefulness of the stationary distribution in understanding the dynamics, consider the CML system \eqref{eq:tsudaCML} with $n=3$, $\omega=0.5$ and $\varepsilon=0.105958$. We constructed a representation $\mathcal{F}$ of the map in the domain $A=[-0.5, 1.5]^3$ with each interval divided into $60$ subintervals of equal length, which resulted in $\mathcal{X}$ consisting of a total of $216\,000$ cubes. We used the sampling rate of $250$ points per cube to construct $\mathcal{F}$.

\begin{figure}[htbp]
    \centerline{
    \begin{subcaptionbox}{A sample trajectory consisting of $100\,000$ points plotted with opacity $0.3$.\label{fig:distr1}}[0.3\linewidth]
    {\includegraphics[height=\linewidth]{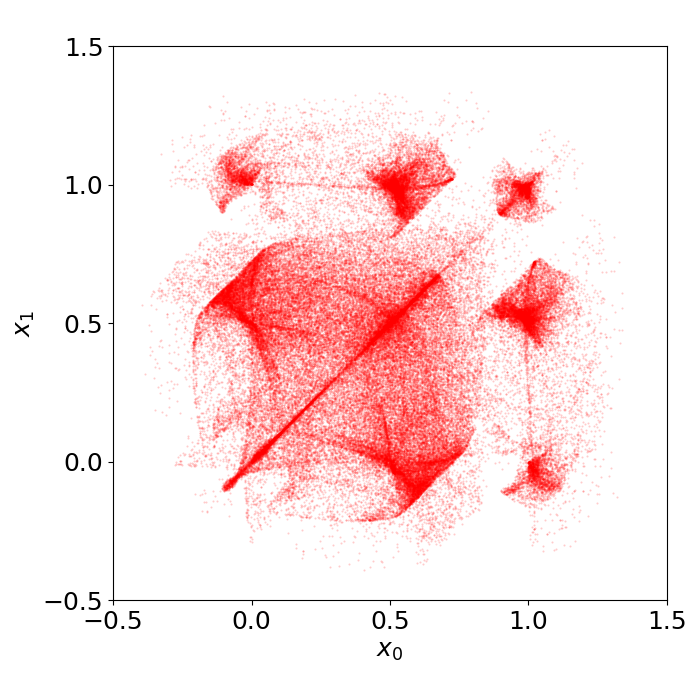}} 
    \end{subcaptionbox}
    \hspace{0.03\linewidth}
    \begin{subcaptionbox}{The starting points (green) and the endpoints (blue) of the $1\,000$ heaviest edges in $\mathcal{F}$.\label{fig:distr2}}[0.3\linewidth]
    {\includegraphics[height=\linewidth]{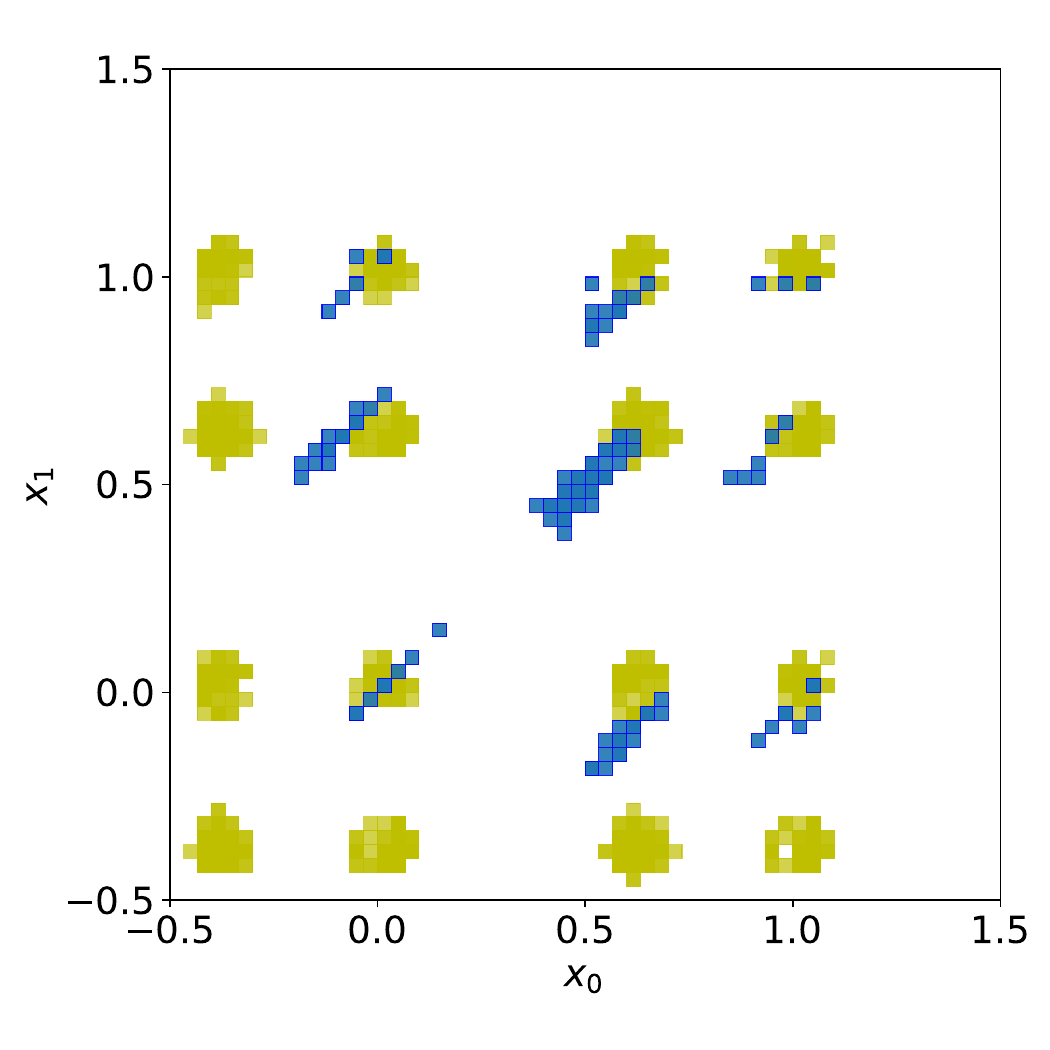}}
    \end{subcaptionbox}
    \hspace{0.03\linewidth}
    \begin{subcaptionbox}{The stationary distribution $\pi$. The darker the shade of gray, the higher the value.\label{fig:distr3}}[0.3\linewidth]
    {\includegraphics[height=\linewidth]{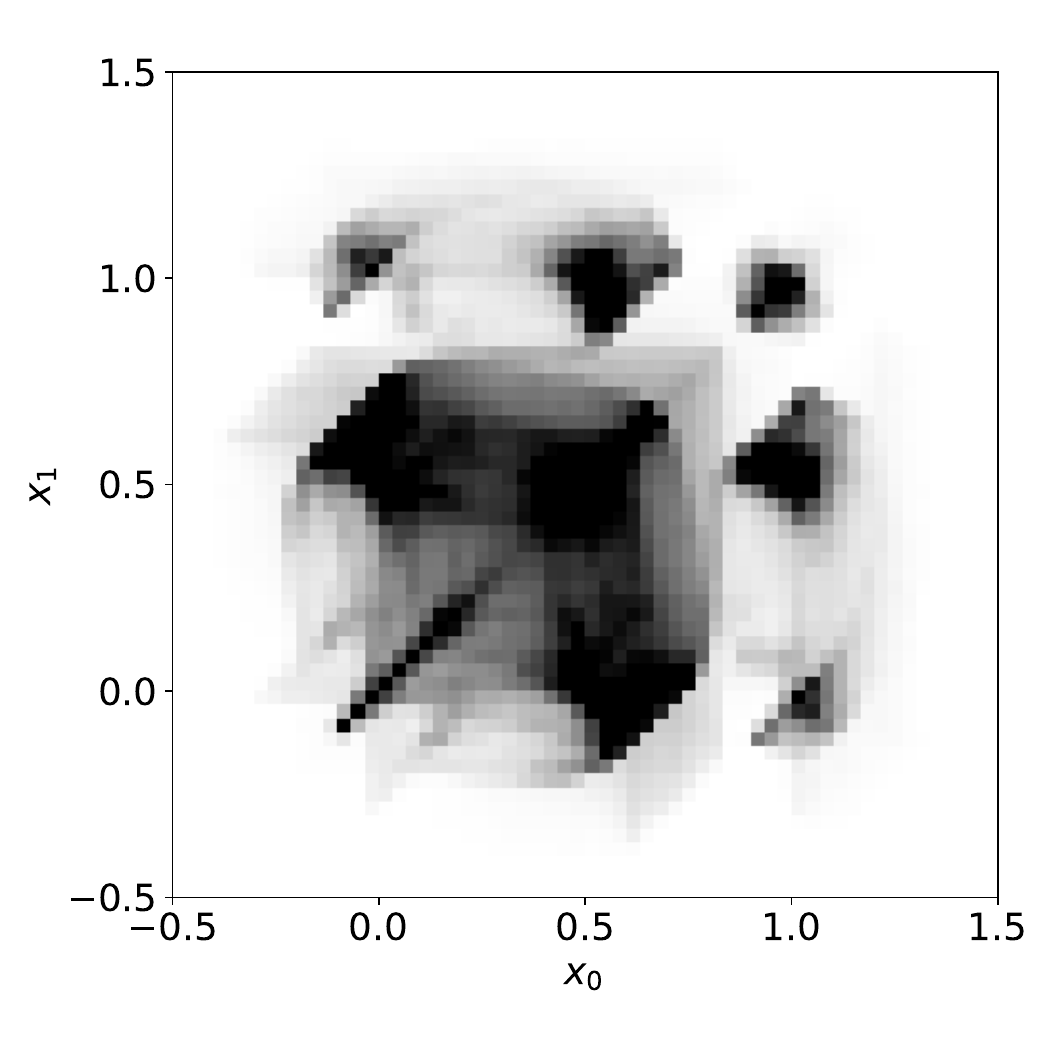}}
    \end{subcaptionbox}
    }
    \caption{Results of computations for a 3-dimensional CML system discussed in Sec.~\ref{sec:matrices}, projected onto the first two coordinates.}
    \label{fig:distribution}
\end{figure}

We first computed a sample trajectory for the system \eqref{eq:tsudaCML} consisting of $100\,000$ iterations of a single initial point to serve as a reference for further results. The points on the trajectory are shown in Fig.~\ref{fig:distr1}. One can clearly see some regions of high density. They apparently indicate the location of attractor ruins around which trajectories spend considerable amount of time. Between and around them, there are regions of lower but still positive density that might correspond to chaotic transitions.

In order to determine the most prominent connections in the graph $\mathcal{F}$, we found $1\,000$ heaviest edges in $\mathcal{F}$. In Fig.~\ref{fig:distr2}, we show cubes that constitute their starting points in one color and their endpoints in another color. The cubes corresponding to the endpoints are essentially in the same location as the clusters appearing in Fig.~\ref{fig:distr1}, and it is evident that the majority of transitions are located within these sets.

Finally, we computed an approximation of a stationary distribution $\pi$ of the transition matrix $P$ of $\mathcal{F}$ using Algorithm \ref{alg:power_iteration}. Its projection onto the first two coordinates is shown in Fig.~\ref{fig:distr3}. The values in the squares (shown using a color scale) are the sums of the distribution $\pi$ on all the cubes that share the first two coordinates. Potential attractor ruins have clearly higher probabilities than transient chaotic states.

\subsection{Spectral analysis}
\label{sec:spectralgap}

Another quantity in the analysis of the transition matrix $P$ is the spectral gap:
\begin{equation}
\label{eq:spectralGap}
\gamma = |\lambda_1| - |\lambda_2|,
\end{equation}
where $(\lambda_1=1, \lambda_2, \lambda_3, \ldots)$ are the eigenvalues of $P$ sorted in descending order by comparing their moduli.

The spectral gap can be used to approximate the mixing time of a Markov chain, which is the time that it takes until a chain gets close to its stationary state. A large gap indicates fast mixing and therefore the lack of metastable structures, while a small gap indicates the possible presence of metastable states or slow transitions between them \cite{Seabrook2023}. In the context of chaotic itinerancy, one would assume that the gap should be moderately low yet non-zero. By contrast, a finite representation of a dynamical system with no (quasi-)attractors typically exhibits a large spectral gap.

We explain that in the case of coexisting multiple attractors, the Markov chain is not irreducible. A stationary distribution that comprises all the attractors can be restricted to each attractor separately to give rise to a linearly independent set of eigenvectors corresponding to the eigenvalue~$1$. We then obtain $\lambda_2 = 1$ above and possibly more eigenvalues equal to $1$. The spectral gap is then equal to $0$.

For the calculation of the eigenvalues of $P$, we used the \texttt{eigs} function available in the \texttt{scipy.sparse.linalg} module of the SciPy Python package \cite{2020SciPy-NMeth}. This method is optimized for sparse matrices and works efficiently in our case.

\subsection{Entropy rate}
\label{sec:entropy}

Given a stochastic transition matrix $P$, the entropy rate of the corresponding Markov chain is a quantity that corresponds to the average unpredictability of transitions between the states of the Markov chain, which are the elements of the grid in our case; see \cite[Chapter~4]{Cover2006}. The entropy associated with transitions from a single grid element $Q_i$, which we also call \emph{local entropy}, is defined as
\begin{equation}
\label{eq:entropy1}
h_i = -\sum_{j\colon P_{ij}\not=0}P_{ij} \log P_{ij}.
\end{equation}
The \emph{entropy rate} of a time-homogeneous Markov chain with a finite number of states, such as $\mathcal{F}$ in our case, is expressed as the sum of entropies associated to elements of the grid averaged by the stationary distribution $\pi$:
\begin{equation}
\label{eq:entropy2}
h = \sum_i \pi_i h_i.
\end{equation}

It is a common practice to also introduce normalized entropy. We therefore define the \emph{normalized local entropy} and the \emph{normalized entropy rate} as follows:
\[
\hat{h}_i = \frac{h_i}{\log M} \quad\text{and}\quad \hat{h} = \sum_i \pi_i \hat{h}_i,
\]
where $M$ is the maximum number of states that could possibly be reached from the $i$-th state. For the Monte Carlo-like numerical simulation that we use, $M$ is the number of samples taken from each element of the grid $\mathcal{X}$ when computing $\mathcal{F}$, given that $|\mathcal{X}| \ge M$.

The purpose of normalization of the entropy rate is to obtain a number in the range $[0,1]$. The normalized entropy rate $\hat{h}$ is easier to interpret than plain $h$ because one can immediately tell whether the entropy rate is close to the theoretical maximum or stays in a low or high range.

Typically, a low entropy rate is characteristic to stable and fully deterministic dynamics, where transitions take place along easy to predict paths, such as attracting periodic orbits. On the other hand, a high entropy rate is typical for fully chaotic dynamics, where local expansion on the attractor is high, and therefore one grid element is typically mapped to several grid elements. In the case of chaotic itinerancy, one would expect that the entropy rate of the whole $\mathcal{F}$ will take intermediate values, because some regions have a low local entropy that reflects stability near attractor ruins, while other regions have a high local entropy that corresponds to unstable directions through which trajectories leave attractor ruins to chaotic transitions. Because of this, a high variation of local entropy (associated with each grid element) across the discretized state space should indicate the possible presence of chaotic itinerancy. On the other hand, the entropy rate restricted to a subset of grid elements that constitute a quasi-attractor should be significantly lower than the global one.

The local entropy computed for a coarse representation of the CML system \eqref{eq:tsudaCML} with $n=3$, $\omega=0.5$ and $\varepsilon=0.105958$ is shown in Fig.~\ref{fig:entropy}, projected onto the first two coordinates of the state space. The local entropy for all the cubes that shared the first two coordinates was added together. The color scale represents the percentiles of the total range of aggregated local entropy, where blue indicates the lower end of the range and red indicates the higher end of the range. We can distinguish transient, high-entropy chaotic states and low-entropy ordered sets here. When comparing Fig.~\ref{fig:entropy} with Fig.~\ref{fig:distribution}, it becomes apparent that the location of attractor ruins corresponds to the low-entropy subsets of the state space.

\begin{figure}[htbp]
    \centering
    \includegraphics[width=0.5\linewidth]{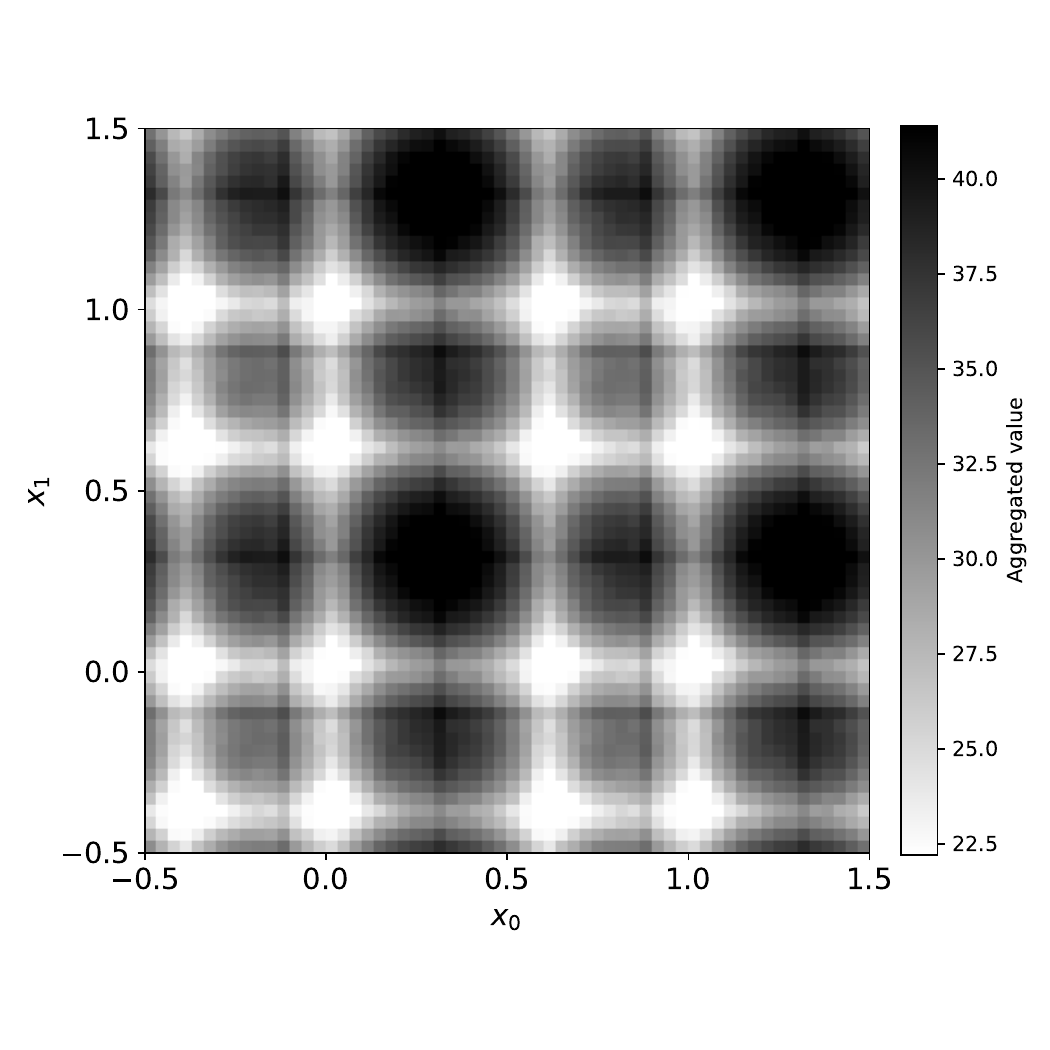}
    \caption{Aggregated local entropy computed for the 3-dimensional CML system discussed in Sec.~\ref{sec:entropy}.}
    \label{fig:entropy}
\end{figure}

\subsection{Strongly connected components and outflow}
\label{sec:SCC}

Let us now discuss the analysis of the directed graph $\mathcal{F}$ or its filtering $\mathcal{F}_\tau$ (see Sec.~\ref{sec:filtering}). We employ two notions through which we are going to analyze the dynamics: strongly connected components and outflow of the graph.

\emph{Strongly connected components} (SCCs) of a directed graph are maximal subsets of vertices with the property that every vertex is reachable from every other vertex along a directed path within the subset. In fact, we are interested in SCCs containing at least one edge, which are called \emph{strongly connected path components} in \cite{Arai2009}.

In the context of analysis of finite resolution dynamics, SCCs correspond to regions of the state space with recurrent or semi-recurrent dynamics. In particular, directly from the definition of the graph and an SCC, it follows that every two points in some elements of the grid corresponding to vertices in a common SCC, are connected by an $\varepsilon$-pseudo-orbit with $\varepsilon$ not exceeding the diameter of the grid elements. We recall that a pseudo-orbit is a sequence of points $x_1, \ldots, x_n$ such that $\mbox{dist}(f(x_i),x_{i+1}) \leq \varepsilon$, that is, it is an orbit in which we are allowed to make a correction of at most $\varepsilon$ at each iteration. For a connection between the two points, we may need to make an $\varepsilon$-correction before we make the first iteration, too. We remark that the notion of pseudo-orbits is investigated in particular in the context of shadowing: under certain assumptions, every pseudo-orbit is actually traced by a real orbit in the system; see e.g.~\cite{Koscielniak2014}.

When computing SCCs of a graph that represents $\mathcal{F}$ for a system that exhibits chaotic itinerancy, we should expect to obtain a single huge component containing all the attractor ruins and points in the region of chaotic transitions. However, when we consider a filtering $\mathcal{F}_\tau$ with an appropriate value of $\tau$, the intuition tells us that we should only obtain attractor ruins as SCCs, because connections in these regions are stronger than in the areas of chaotic transitions.

In order to analyze the stability of SCCs found in the graph of a filtering $\mathcal{F}_\tau$, we revert to $\mathcal{F}$ and use the concept of \emph{outflow}, motivated by the standard notion of graph conductance, to analyze the components found for $\mathcal{F}_\tau$. The concept of conductance was introduced in \cite{Jerrum1988} for time-reversible ergodic Markov chains.
Conductance is a metric that quantifies the diffusion of a given set.
A modification of the formula that was introduced in \cite{Kannan2004} to study the quality of spectral clustering is provided below.

The \emph{outflow} of a set $A\subset \mathcal{X}$ is the following quantity:
\begin{equation} \label{eq:conductance_our}
    \Phi(A) = \frac{\sum_{i\in A}\sum_{j\in \mathcal{X} \setminus A} \pi_i P_{ij}}{\pi(A)},
    \quad \text{where} \quad \pi(A) = \sum_{i\in A}\sum_{j\in \mathcal{X}} \pi_i P_{ij} = \sum_{i\in A} \pi_i.
\end{equation}

Intuitively, $\Phi(A)$ measures how easy it is to leave the set of grid elements~$A$. Low outflow means that very few trajectories escape $A$, which indicates possible stability of the set, whereas high outflow indicates that many trajectories escape from $A$, which indicates instability of $A$, possibly as part of chaotic dynamics present in CI.

Note that the denominator in \eqref{eq:conductance_our} was originally introduced in \cite{Kannan2004} as $\min\{\pi(A),\pi(\mathcal{X}\setminus A)\}$ to ensure that small sets do not have high conductance and, \textit{vice versa}, large sets do not have low conductance. However, our situation is different. We actually want to quantify the relative amount of $\pi$ that flows out of a set $A$ with respect to $\pi$ accumulated inside $A$, independent of the size of $A$.

To sum up, the computation of the SCCs of a filtering $\mathcal{F}_\tau$ of a finite-resolution representation $\mathcal{F}$ of the map $f$ of interest yields a collection of interesting regions where the dynamics concentrates, and the computation of their outflow provides information on their instability.

\subsection{Vitali variation}
\label{sec:var}

To quantify the variation of local entropy and of the density of the stationary distribution across the phase space, we refer to the notion of variation of a real-valued function and its generalization to functions of several variables introduced by Vitali; see e.g.~\cite{Leonov1996}. Since our space is discretized into a rectangular grid and the functions we consider are constant on the grid elements, we have a straightforward formula for the $n$-dimensional variation. We provide this formula below.

Recall that our phase space cuboid $X = \prod_{i=1}^{n} [a_i,b_i]$ is divided into the set $\mathcal{X}$ of $d_1 \times \cdots \times d_n$ smaller cuboids; see Sec.~\ref{sec:sets}. Consider a real-valued function $z \colon \mathcal{X} \to \mathbb{R}$. We shall represent each element of $\mathcal{X}$ by its index $(j_1, \ldots, j_n)$ with $j_i = 0, \ldots, d_i - 1$ as follows:

\begin{equation}
Q (j_1,\ldots,j_n) = \prod_{i=1}^{n} [a_i + \frac{j_i}{d_i}(b_i - a_i), a_i + \frac{j_i + 1}{d_i}(b_i - a_i)]
\end{equation}

Following \cite[equations (1) and (2)]{Leonov1996}, we first define the \emph{local $n$-dimensional variation} starting at $Q(j_1, \ldots, j_n)$, provided that $j_i < d_i - 1$ for all $i = 1, \ldots, n$, as follows:

\begin{equation}
\sigma_n (z, (j_1, \ldots, j_n)) = \sum_{\nu_1 = 0}^{1} \cdots \sum_{\nu_n = 0}^{1} (-1)^{\nu_1 + \cdots \nu_n} z(Q(j_1 + 1 - \nu_1, \ldots, j_n + 1 - \nu_n)),
\end{equation}
which reduces in the $2$-dimensional case to:
\begin{multline}
\sigma_2 (z, (j_1,j_2)) = z(Q(j_1 + 1, j_2 + 1)) - z(Q(j_1 + 1, j_2)) - \\ z(Q(j_1, j_2 + 1)) + z(Q(j_1, j_2)),
\end{multline}
and in the $3$-dimensional case to:
\begin{multline}
\sigma_3 (z, (j_1,j_2,j_3)) =
z(Q(j_1 + 1, j_2 + 1, j_3 + 1)) -
z(Q(j_1 + 1, j_2 + 1, j_3)) - \\
z(Q(j_1 + 1, j_2, j_3 + 1)) + 
z(Q(j_1 + 1, j_2, j_3)) -
z(Q(j_1, j_2 + 1, j_3 + 1)) + \\ 
z(Q(j_1, j_2 + 1, j_3)) +
z(Q(j_1, j_2, j_3 + 1)) -
z(Q(j_1, j_2, j_3)).
\end{multline}
Then we use the following formula for the \emph{$n$-dimensional variation} of the function $z$ on $\mathcal{X}$:
\begin{equation}
V_n (z, \mathcal{X}) = \sum_{j_1 = 0}^{d_1 - 2} \cdots \sum_{j_n = 0}^{d_n - 2} \left\lvert \sigma_n (z, (j_1, \ldots, j_n)) \right\rvert.
\end{equation}

Note that the value of $V(z,\mathcal{X)}$ depends not only on the function $z$, but also on the number of grid elements in $\mathcal{X}$. Therefore, to obtain a value that is easier to compare for different subdivisions of the same space and for functions with different ranges of values, we normalize this variation by dividing it by the total number $\lvert\mathcal{X}\rvert$ of grid elements and by the amplitude of $z$ defined as
\[
\text{amp}(z) = \max_{x \in \mathcal{X}} z(x) - \min_{x \in \mathcal{X}} z(x)
\]
and we define the \emph{normalized variation} of the function $z$ on $\mathcal{X}$ as follows:
\begin{equation}\label{eq:NormVariation}
\NV(z,\mathcal{X}) = \frac{V(z,\mathcal{X)}}{|\mathcal{X}|\cdot \text{amp}(z)}.
\end{equation}

As shown in Fig.~\ref{fig:V_and_NV}, normalized variation does not depend on the choice of the subdivision of the space, so we use this concept to measure the variation of stationary distribution density over $\mathcal{X}$, as well as the local entropy across the grid~$\mathcal{X}$.
\begin{figure}[htbp]
    \centering
    \includegraphics[width=0.5\linewidth]{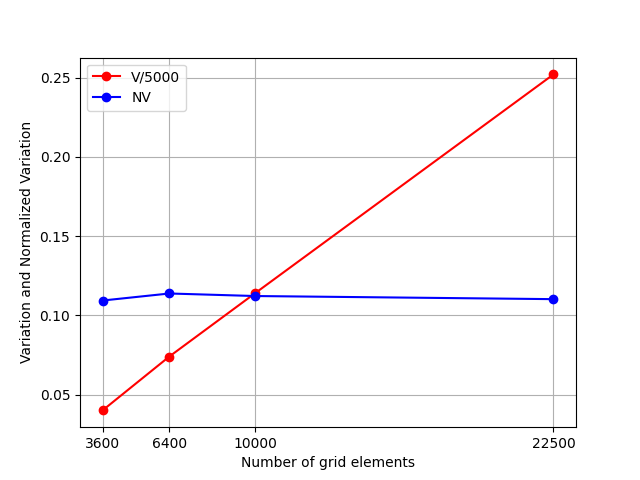}
    \caption{Variation and normalized variation of local entropy for a $2$D GCM model discussed later in Sec.~\ref{sec:2dGCM}. Variation values were divided by $5000$ for better visual comparison with normalized variation.}
    \label{fig:V_and_NV}
\end{figure}

\subsection{Chaotic itinerancy assessment}
\label{sec:assess}

We use the quantities defined in Sec. \ref{sec:matrices}--\ref{sec:var} for the assessment of whether a given dynamical system exhibits the phenomenon of chaotic itinerancy (CI) as described below. A schematic overview of the proposed numerical procedure is shown in Fig.~\ref{fig: assessment algorithm}.

\begin{figure}[htbp]
    \centering
    \includegraphics[width=\linewidth]{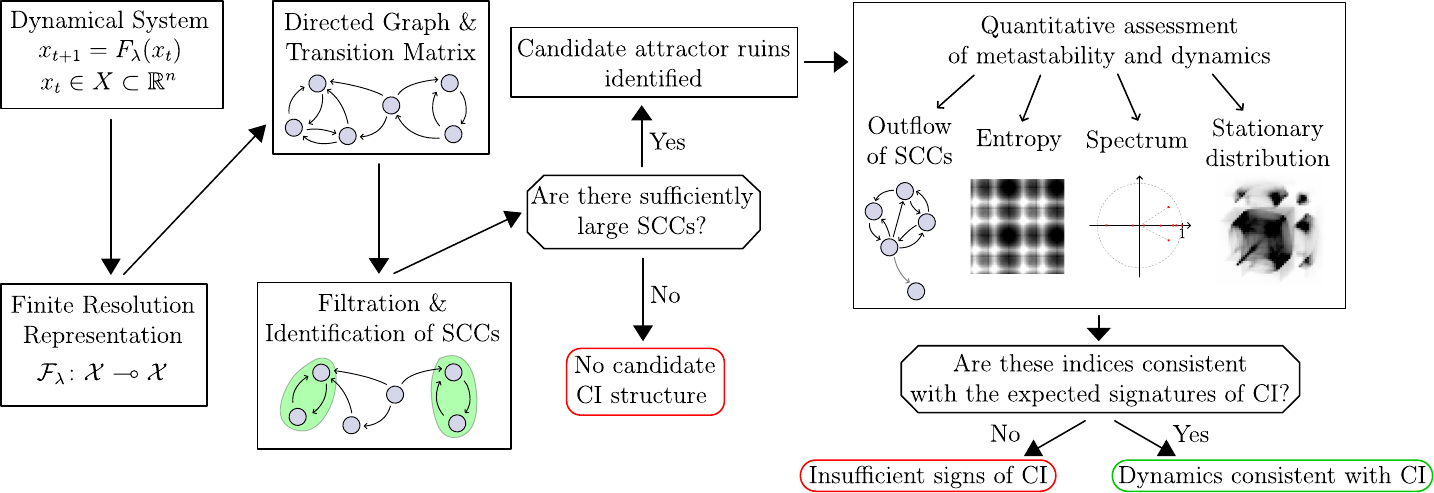}
    \caption{Schematic overview of the proposed procedure used to assess chaotic itinerancy for a given dynamical system.}
    \label{fig: assessment algorithm}
\end{figure}


Given a representation $\mathcal{F}$ of a map $f \colon X \to X$ with respect to a grid $\mathcal{X}$ in $X$, as explained in Sec.~\ref{sec:sets}, we consider the corresponding transition matrix $P$ and compute its stationary distribution (Sec.~\ref{sec:matrices}), spectrum (Sec.~\ref{sec:spectralgap}), and normalized local entropy and entropy rate (Sec.~\ref{sec:entropy}). Then we select a few different values of $\tau$, for example, $\tau \in \{0.2, 0.5, 0.8\}$, and we compute the corresponding collections of SCCs along with their outflow (Sec.~\ref{sec:SCC}). We also analyze the variation of the stationary distribution and entropy rate across the phase space (Sec.~\ref{sec:var}).

In a system that exhibits CI, the computed stationary distribution should reveal several regions of high density and a considerable amount of space with low density. In fact, a high variation of the density would be an indicator of the existence of such a structure.

Another indicator of CI is the coexistence of two or more eigenvalues close to $1$ and thus a small spectral gap. This coincides with the fact that we assume the existence of two or more metastable attractor ruins in chaotically itinerant dynamics. These attractor ruins correspond to strongly connected path components in the graph $\mathcal{F}_\tau$ for some choice of $\tau$.

To say that a given system exibits CI, we require that for $\tau=0$ the system is transitive on the global attractor, which is equivalent to saying that the graph $\mathcal{F}$ has a single non-trivial SCC that covers all the space where the long-term dynamics of the system takes place. Then for some positive value of $\tau$, the graph $\mathcal{F}_\tau$ should split into a collection of smaller SCCs that correspond to attractor ruins with low outflow and transient states characterized by outflow close to $1$. We also require SCCs corresponding to attractor ruins to have similar sizes (measured as the number of cubes in the combinatorial representation $\mathcal{F}$), accumulate similar amounts of $\pi$ within them, and we also require that the entropy rate restricted to those components is significantly lower than the entropy rate computed for the entire grid. To measure these values, we use the following auxiliary piecewise linear function $\mu \colon [0,1] \to [0,1]$ such that $\mu(0)=\mu(1)=0$ and $\mu(1/N)=1$, where $N$ is the number of SCCs found:
\begin{equation}\label{eq:func}
\mu(x) = \begin{cases}
    Nx, \quad\quad \quad\quad \ \  \text{for } x \le \frac{1}{N} \\
    \frac{N}{N-1}(1-x), \quad \text{otherwise}
\end{cases}
\end{equation}
For $j=1,2\ldots,N$, we compute $A_j = \mu(\frac{|\SCC_j|} {\sum_{i=1}^{N}|\SCC_i|})$ to quantify the relative sizes of SCCs. This kind of quantification produces a number in the range $[0,1]$, where $1$ is obtained if the size of the SCC equals the mean size of all the SCCs, and a number close to $0$ is obtained if the SCC is either very small or very large in comparison to the mean size of the SCCs. Similarly, for measuring accumulated $\pi$ within an SCC, we calculate $B_j = \mu(\frac{\pi(\SCC_j)} {\sum_{i=1}^{N}\pi(\SCC_i)})$.

\section{Computations for a selection of different models}
\label{sec:models}

To show the usefulness of our approach for the detection of chaotic itinerancy (CI) using the graph representation of a dynamical system introduced in Sec.~\ref{sec:sets} and the quantities and criteria introduced in Sec.~\ref{sec:metrics}, we first analyze a $2$-dimensional system that does not exhibit chaotic itinerancy (Sec.~\ref{sec:2dGCM}) and then a $3$-dimensional system in which we find chaotic itinerancy for some ranges of parameters (Sec.~\ref{sec:tsuda3d} and \ref{sec:compRange}).

\subsection{Two-dimensional coupled system of logistic maps}
\label{sec:2dGCM}

We begin with the system of two globally coupled logistic maps. In \eqref{eq:CML}, we set $N=2$, $a=2$ and $f(x)=1-ax^2$. Although this system does not, in fact, exhibit CI, it helps to familiarize oneself with the methods used and shows how to distinguish ``classic'' chaos from chaotically itinerant dynamics.

\subsubsection{Numerical simulations}
Our numerical simulations show that for different values of $\varepsilon\in \{0.15, 0.22, 0.23, 0.25\}$ one observes visually different dynamical behavior of trajectories in the GCM system. The map $f$ itself is chaotic, and we only play with the coupling strength $\varepsilon$. For $\varepsilon=0.15$ and $\varepsilon=0.22$, the chaotic behavior of $f$ dominates the coupling. In the case of $\varepsilon=0.23$ and $\varepsilon=0.25$, trajectories settle on a chaotic attractor after an initial period of chaotic wandering. Sample trajectories for these cases are shown in Fig.~\ref{fig:2dGCM_sample_trajectories}. In the remainder of this subsection, we shall focus on these four chosen values of the parameter~$\varepsilon$.

\begin{figure}[htbp]
    \centering

    \begin{subfigure}{0.24\textwidth}
    \includegraphics[width=\linewidth]{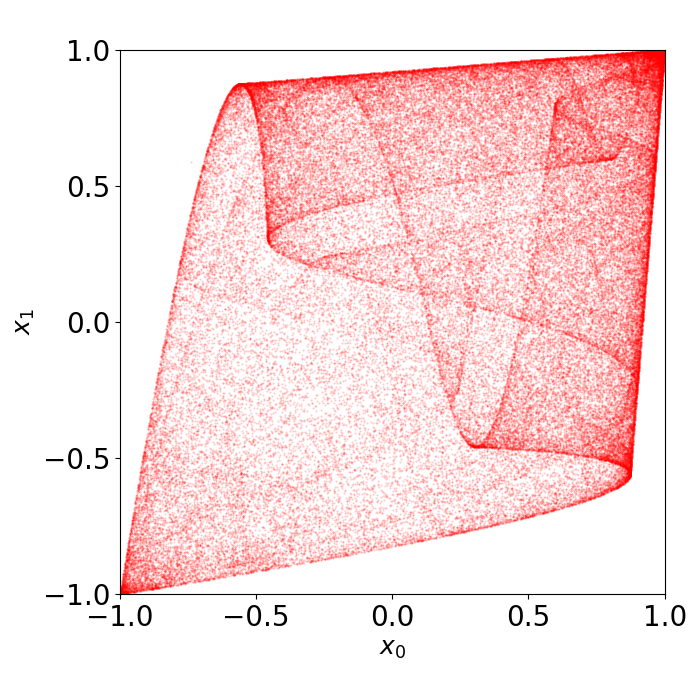}
    \caption{}\label{fig:2dGCMtrajectory_eps_015}
    \end{subfigure}
    \begin{subfigure}{0.24\textwidth}
    \includegraphics[width=\linewidth]{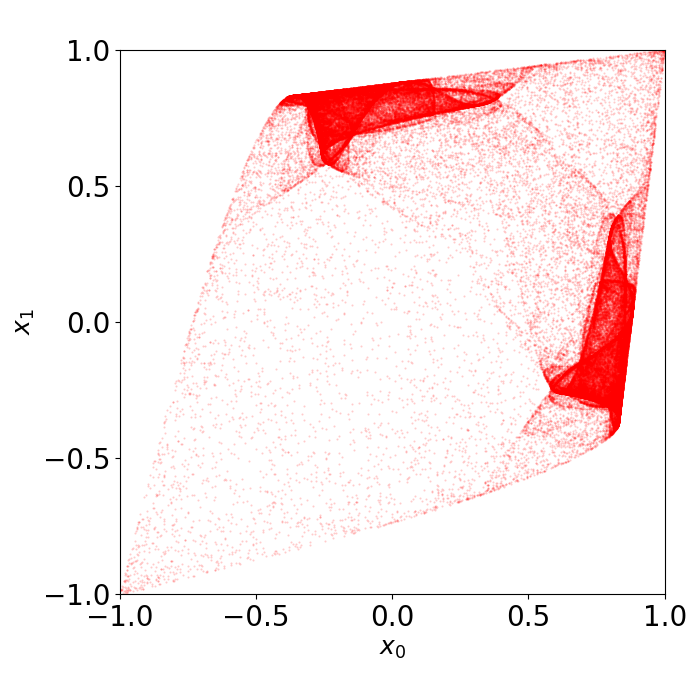}
    \caption{}\label{fig:2dGCMtrajectory_eps_022}
    \end{subfigure}
    \begin{subfigure}{0.24\textwidth}
    \includegraphics[width=\linewidth]{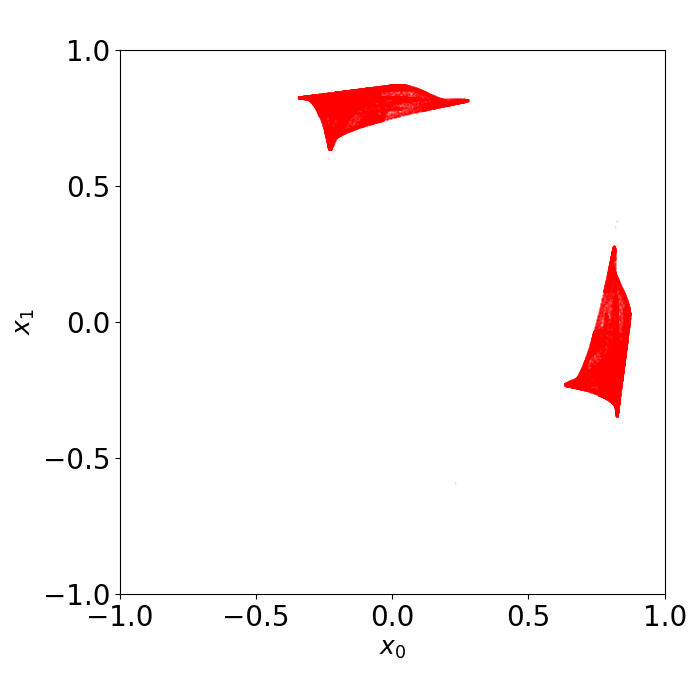}
    \caption{}\label{fig:2dGCMtrajectory_eps_023}
    \end{subfigure}
    \begin{subfigure}{0.24\textwidth}
    \includegraphics[width=\linewidth]{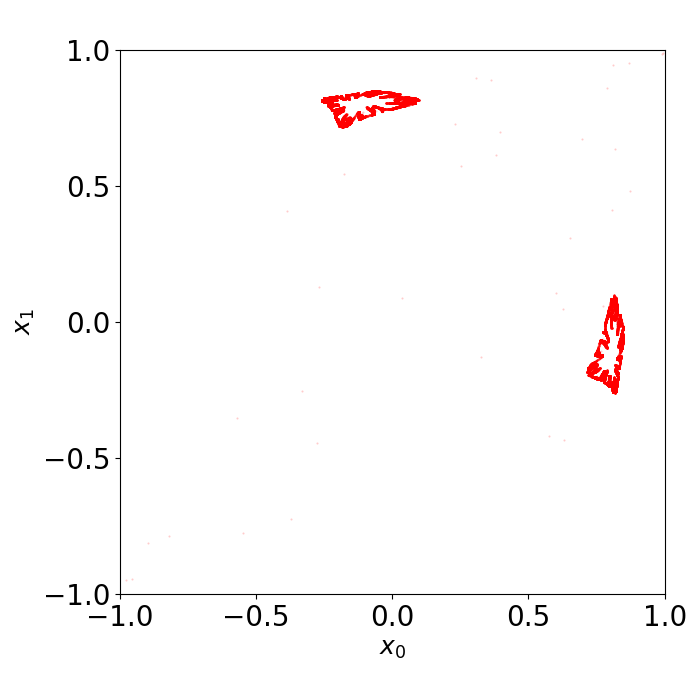}
    \caption{}\label{fig:2dGCMtrajectory_eps_025}
    \end{subfigure}
    
    \caption{Sample trajectories of the $2$-dimensional GCM system discussed in Sec.~\ref{sec:2dGCM} for a few values of $\varepsilon$: \mbox{(A) $\varepsilon=0.15$,} (B) $\varepsilon=0.22$, (C) $\varepsilon=0.23$, (D) $\varepsilon=0.25$}
    \label{fig:2dGCM_sample_trajectories}
\end{figure}

\subsubsection{Discretization of the system}
For the purpose of using our proposed framework, we shall work with the state space $X = [-1,1] \times [-1,1]$ discretized into the grid of $200 \times 200$ squares ($2$-dimensional cubes) of the same size. For the Monte Carlo simulation step, we take $250$ samples from each square. The computed digraph has $40\,000$ vertices in each case. These vertices represent all the squares that together form the grid.

\subsubsection{Stationary distribution and local entropy}
We begin by computing a stationary distribution and the local entropy, and then by quantifying their variation using the formula \eqref{eq:NormVariation}. Outcome of such computations is shown in Table \ref{tab:2dCGMvar}. In each case we see that the normalized variation of local entropy, $\NV(h_i, \mathcal{X})$, is much smaller than the amplitude $\text{amp}(h_i)$ across the whole grid. The opposite holds true for the stationary distribution. In cases $\varepsilon=0.15$ and $\varepsilon=0.22$ where chaos is present, $\NV(\pi, \mathcal{X})$ is larger than $\text{amp}(\pi)$, and in cases where a typical trajectory settles on a chaotic attractor,  the opposite occurs.

\begin{table}[htbp]
\centering
\renewcommand{\arraystretch}{1.3}
\setlength{\tabcolsep}{9pt}
\begin{tabular}{c | c | c | c | c} 
$\varepsilon$  & $0.15$ & $0.22$ & $0.23$ & $0.25$  \\
\hline
$\text{amp}(h_i)$ & 0.52903 & 0.52557 & 0.51943 & 0.51962 \\ 
$\NV(h_i, \mathcal{X})$ & 0.10196 & 0.10545 & 0.10668 & 0.10608 \\ 
$\text{amp}(\pi)$ & 0.00259 & 0.00218 & 0.00470 & 0.00629 \\
$\NV(\pi, \mathcal{X})$ & 0.00307 & 0.00497 & 0.00300 & 0.00311 \\
$\hat{h}$ & 0.32108 & 0.28572 & 0.27166 &  0.26434 \\
$\gamma$ & 0.34175 & 0.01968 & 0.00176 & 0.00001 \\
$||\pi P - \pi||_{l_1}$ & 0.00078 & 0.00140 & 0.00176 & 0.00190
\end{tabular}
\caption{Normalized variation of local entropy, stationary distribution, entropy rate, spectral gap and error of estimation of $\pi$ for the considered values of $\varepsilon$ in the 2-dimensional GCM system.}
\label{tab:2dCGMvar}
\end{table}

\subsubsection{Detection of attractor ruins}
In the next step, we search for strongly connected components (SCCs) across many thresholds of filtration to find collections of co-existing attractor ruins.
In the cases of $\varepsilon=0.15$, $\varepsilon=0.22$ and $\varepsilon=0.23$, we do not observe collections of two or more SCCs approximately equal in size for any choice of $\tau$; therefore, these cases do not constitute valid candidates for chaotic itinerancy. The sizes of the SCCs found for $\varepsilon = 0.25$ are shown in Fig.~\ref{fig:sizes_of_SCCs_of2dGCM} along with the value of $\tau$ for which they arise.
\begin{figure}[tbp]
    \includegraphics[width=0.8\linewidth]{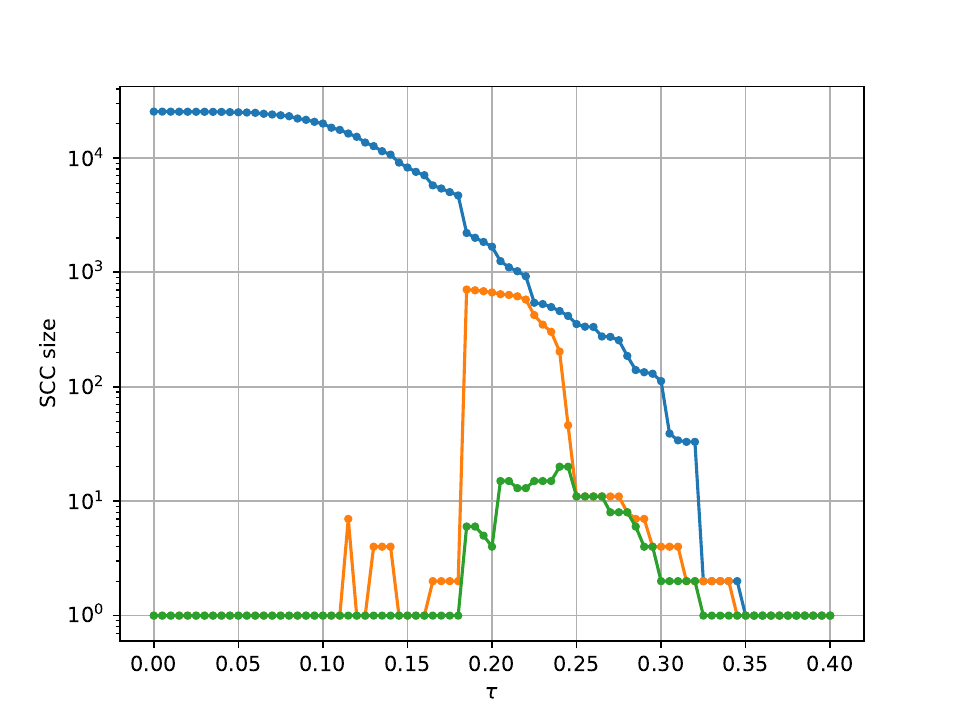}
    \label{fig:2dGCMeps025size}
    \caption{Sizes of the largest strongly connected components within $\mathcal{F_\tau}$ for $\tau \in [0,0.4]$ with step $0.005$ and $\varepsilon=0.25$. Note the logarithmic scale on vertical axis.}
    \label{fig:sizes_of_SCCs_of2dGCM}
\end{figure}
The two SCCs of similar size found for the filtration threshold $\tau=0.185$ are shown in Fig.~\ref{fig:SCCS_for2dGCMeps025}.
\begin{figure}[htbp]
    \centering
    \includegraphics[width=0.8\linewidth]{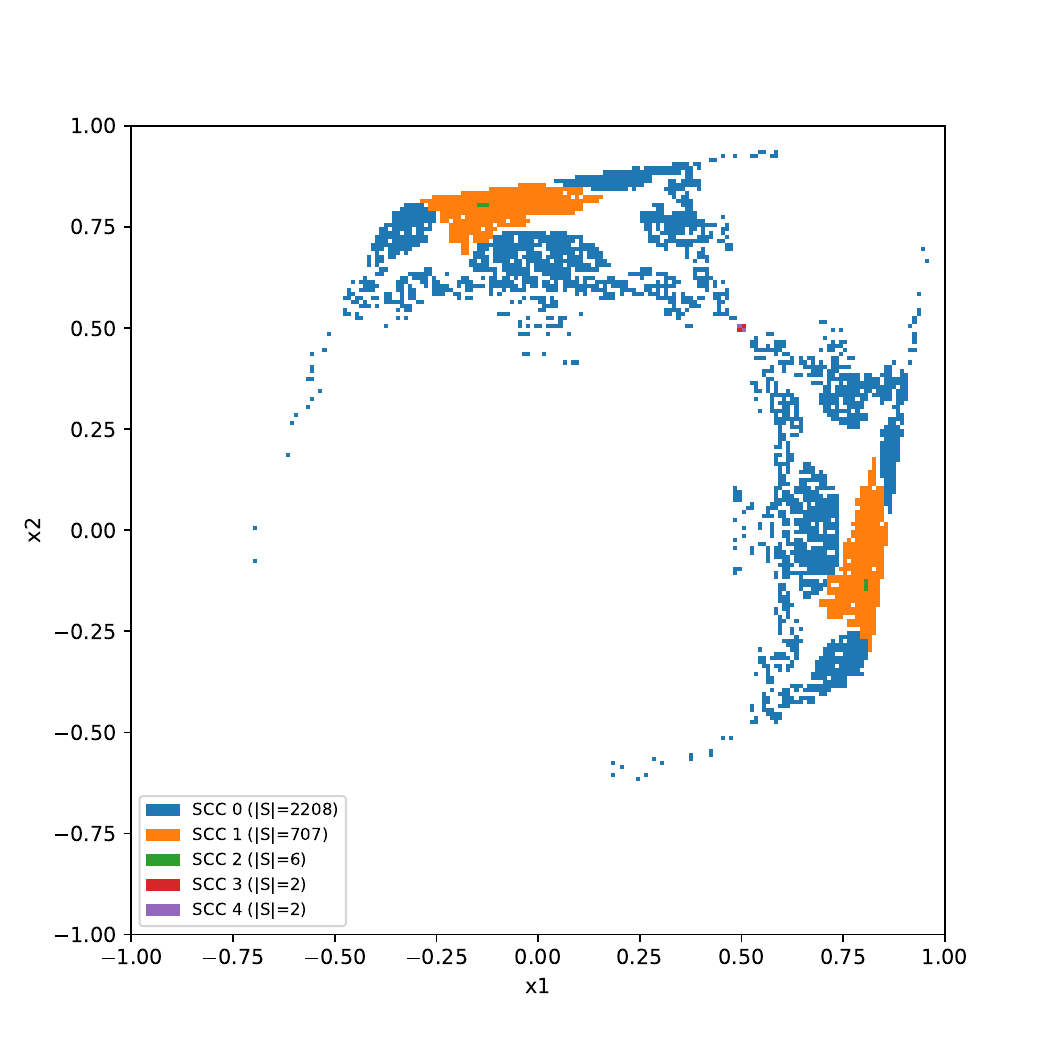}
    \caption{Two large and three small SCCs found for $\varepsilon=0.25$ and filtration threshold $\tau=0.185$.}
    \label{fig:SCCS_for2dGCMeps025}
\end{figure}
The sizes of these SCCs are $2208$ and $707$, and their respective outflows are approximately $0.21$ and $0.05$. The first of those SCCs represents the chaotic state of the system.
The second SCC covers the attractor of the system (shown in Fig.~\ref{fig:SCCS_for2dGCMeps025}) and, due to the finite discretization of state space, a small portion of edges point out of the set.
The other SCCs seem to be transient states, with a very small size and outflow close to $1$. The three eigenvalues largest in magnitude for $\varepsilon=0.25$ are $\{1, -0.999, 0.9603\}$; this suggests strong, almost persistent oscillations, and is consistent with the observation of typical trajectories for this system.

\subsubsection{Assessment of chaotic itinerancy}
The final step is to compute $A_j$ and $B_j$ for the SCCs found at the chosen threshold; these are the quantities defined in Sec.~\ref{sec:assess} to check how well balanced the sizes of the SCCs are. Those indices, as well as normalized entropy rates and outflow, are gathered in Table~\ref{tab:ABCDkappa}. The small values of most of the indices rule out the possibility of CI within this system.

\begin{table}[htbp]
\centering
\renewcommand{\arraystretch}{1.3}
\setlength{\tabcolsep}{9pt}
\begin{tabular}{c | c | c}
 & \textbf{$\SCC_1$} & \textbf{$\SCC_2$} \\
\hline
$A_j$ & 0.48507 & 0.48507 \\
$B_j$ & 0.01585 & 0.01864 \\
$\Phi_j$ & 0.21343 & 0.05139 \\
$\hat{h}(\SCC_j)$ & 0.28402 & 0.26317 \\
\end{tabular}
\caption{Indices computed for the $2$-dimensional GCM system with $\varepsilon=0.25$.}
\label{tab:ABCDkappa}
\end{table}

\subsection{Three-dimensional coupled map lattice}\label{sec:tsuda3d} 
To follow up with a more complex system that shows signs of CI under visual inspection, we carry out computation for model \eqref{eq:tsudaCML}. Figures for this particular model were included throughout this article and now we conclude with discussing the outcome of extensive computations.

\subsubsection{Discretization of the system}
The whole dynamics of the system is contained within $[-0.5,1.5]^3\subseteq\mathbb{R}^3$, and thus we choose this region as the domain $X$.
The region $X$ is then divided into a collection of $216\ 000$ cubes ($60\times60\times60$) that together form a grid $\mathcal{X}$, which is a basis for the definition of the combinatorial map $\mathcal{F}$. We set $250$ as the number of samples for the Monte Carlo simulation step.

\subsubsection{Stationary distribution and local entropy}
Table~\ref{tab:TsudaCMLvar} shows the variation of the local entropy and the stationary distribution, their respective amplitudes, the entropy rate, the spectral gap, and the error of the approximation of $\pi$.

\begin{table}[htbp]
\centering
\renewcommand{\arraystretch}{1.3}
\setlength{\tabcolsep}{9pt}

 \begin{tabular}{c | c} 
 quantity & value \\
  \hline
 $\text{amp}(h_i)$ & 0.79388 \\ 
 $\NV(h_i, \mathcal{X})$ & 0.12483 \\ 
 $\text{amp}(\pi)$ & 0.00215 \\
 $\NV(\pi, \mathcal{X})$ & 0.00577 \\
 $\hat{h}$ & 0.48386 \\
 $\gamma$ & 0.10504 \\
 $||\pi P - \pi||_{l_1}$ & 0.00158 \\ [1ex] 
 \end{tabular}
 \caption{Normalized variation of local entropy, stationary distribution, entropy rate, spectral gap and error of estimation of $\pi$ for the considered model \eqref{eq:tsudaCML}.}
 \label{tab:TsudaCMLvar}
\end{table}

\subsubsection{Detection of attractor ruins}
We work with the filtration of the graph $\mathcal{F}$ in order to capture the SCCs that appear for the threshold $\tau \in [0.12, 0.2]$, with computations made with step $0.001$. As we see in Fig.~\ref{fig:SCC_sizes_tsuda}, for $\tau\in[0.12, 0.132]$, there is only one global SCC that contains all the attractor ruins together with chaotic transitions between them. Then for $\tau = 0.133$, enough light edges in $\mathcal{F}$ are cut off, and therefore we see the emergence of another SCC; our computations show that its outflow is low. This process repeats for $\tau=0.137$, uncovering another SCC that corresponds to an attractor ruin in the state space, as shown in Fig.~\ref{fig:SCCs_in_state_space_CML}.

\subsubsection{Assessment of chaotic itinerancy}
The indices computed for the SCCs obtained for $\mathcal{F}_{0.137}$ are gathered in Table~\ref{tab:TsudaCMLABCD} and indicate that this model can be assessed as a case of chaotic itinerancy. The values of $A_j$ and $B_j$ are close to $1$ in the three cases, and $\Phi_j$ (the outflow of the $j$-th component) takes values below $0.5$. Note that entropy rates restricted to the SCCs are by about $30\%$ lower than the global entropy rate $\hat{h} \approx 0.48$.

\begin{figure}[htbp]
    \centering
    \includegraphics[width=0.8\linewidth]{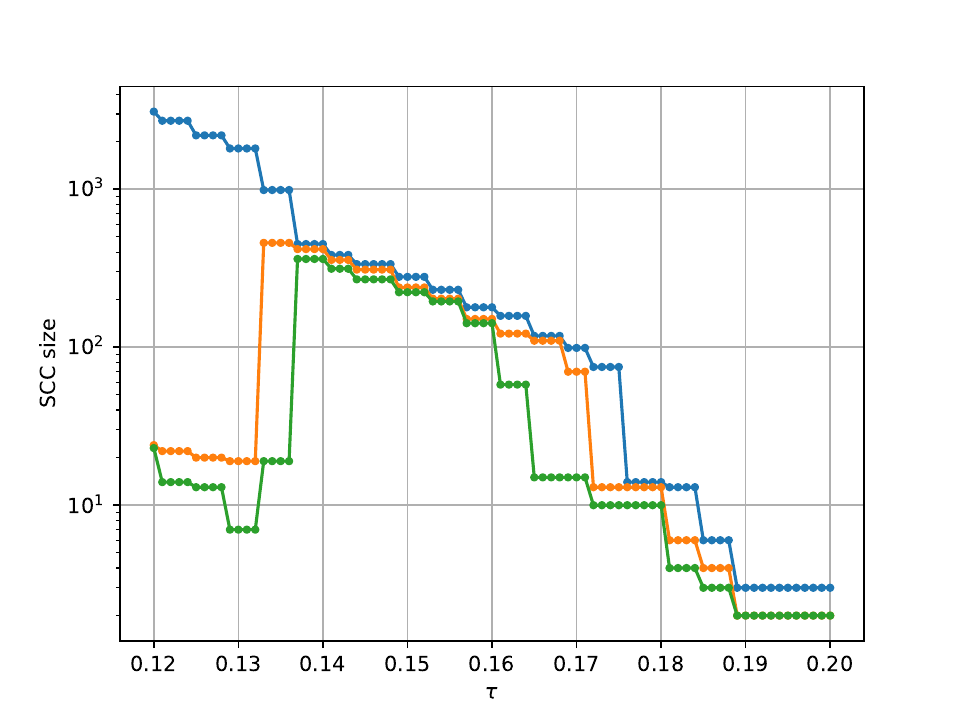}
    \caption{Sizes of the three largest SCCs as a function of threshold $\tau\in[0.12, 0.2]$ of the filtration for the $3$-dimensional CML system~\eqref{eq:tsudaCML}.}
    \label{fig:SCC_sizes_tsuda}
\end{figure}

\begin{figure}[htbp]
    \begin{subfigure}{0.48\textwidth}
    \includegraphics[width=\linewidth]{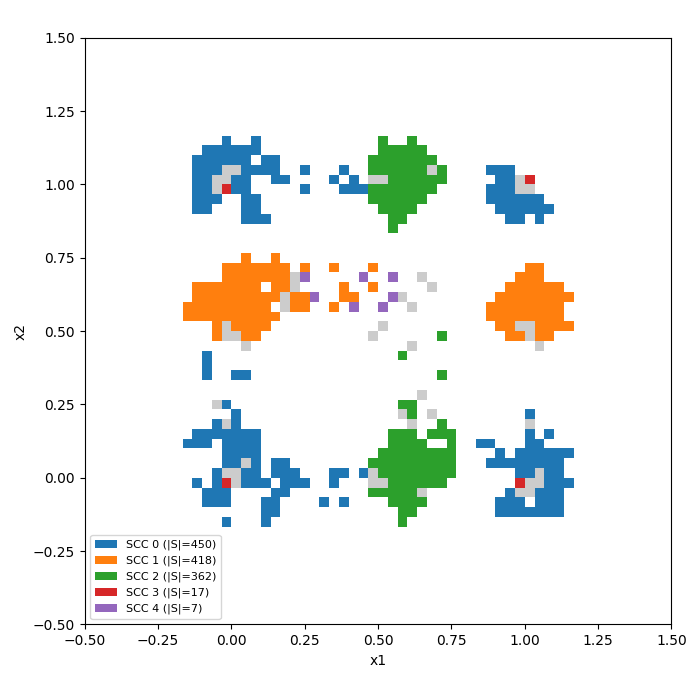}
    \caption{}\label{fig:tsudaSCCs_2dview_tau138}
    \end{subfigure}
    \begin{subfigure}{0.48\textwidth}
    \includegraphics[width=\linewidth]{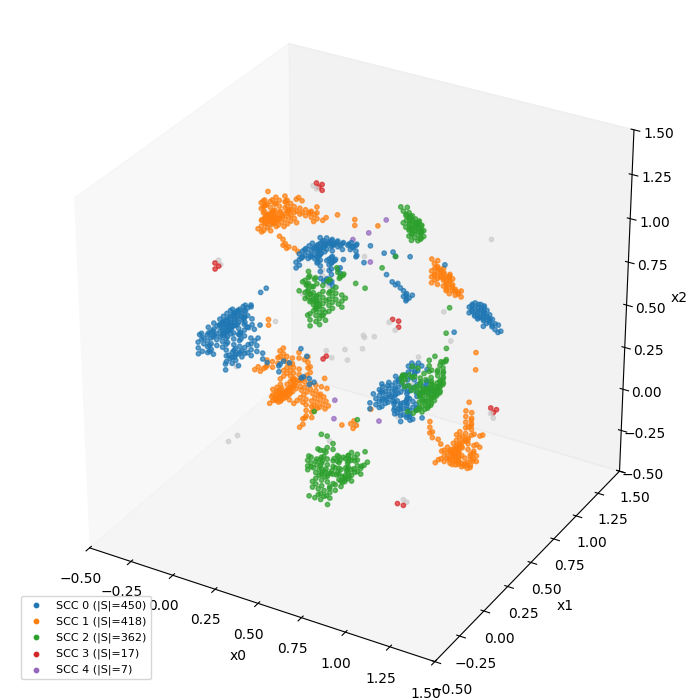}
    \caption{}\label{fig:tsudaSCCs_3dview_tau138}
    \end{subfigure} 
    \caption{SCCs on the grid $\mathcal{X}$ projected onto the first two coordinates for $\tau=0.137$ (A), and centers of the cubes shown as a $3$-dimensional plot (B).}
    \label{fig:SCCs_in_state_space_CML}
\end{figure}

\begin{table}[htbp]
\centering
\renewcommand{\arraystretch}{1.3}
\setlength{\tabcolsep}{9pt}

\begin{tabular}{c | c | c | c}
 & \textbf{$\SCC_1$} & \textbf{$\SCC_2$} & \textbf{$\SCC_3$} \\
\hline
$A_j$ & 0.95121 & 0.99024 & 0.88292 \\
$B_j$ & 0.97644 & 0.99490 & 0.94270 \\
$\Phi_j$ & 0.42089 & 0.44359 & 0.32364 \\
$h(\SCC_j)$ & 0.35631 & 0.35483 & 0.35328 \\
\end{tabular}
\caption{Indices computed for the first three SCCs in the $3$D CML system \eqref{eq:tsudaCML}.}
\label{tab:TsudaCMLABCD}
\end{table}

\subsection{Computations for a range of parameters.}
\label{sec:compRange}

We conclude the paper by describing a comprehensive computation that we conducted for a range of parameters in the $3$-dimensional CML system \eqref{eq:tsudaCML}. We considered the parameter space $\Lambda = [0.25,0.75]\times[0.05,0.15]$, from which we sampled $41\times 51$ evenly spaced pairs $(\omega,\varepsilon) \in \Lambda$, obtaining $2091$ pairs in total. For each pair, we constructed grid $\mathcal{X}$ consisting of $216\ 000$ cubes and numerically approximated the combinatorial map $\mathcal{F}_{\omega,\varepsilon}\colon \mathcal{X}\multimap \mathcal{X}$ as described in sections \ref{sec:sets} and \ref{sec:tsuda3d}. For each map, the collection of indices described in Sec.~\ref{sec:assess} were computed and combined into a database consisting of results obtained for all the pairs of parameters. Let us illustrate and discuss the outcome of these computations.

\subsubsection{Search for attractor ruins}
We first check the presence of at least three strongly connected components composed of $200$ or more cubes for some value of the filtration threshold $\tau$. The overview of the parameter space $\Lambda$ and pairs $(\omega,\varepsilon)$ that fulfill this requirement is shown in Fig.~\ref{fig: tau-value at the moment of SCC emergence on parameter space}.
\begin{figure}[htbp]
    \centering
    \includegraphics[width=0.75\linewidth]{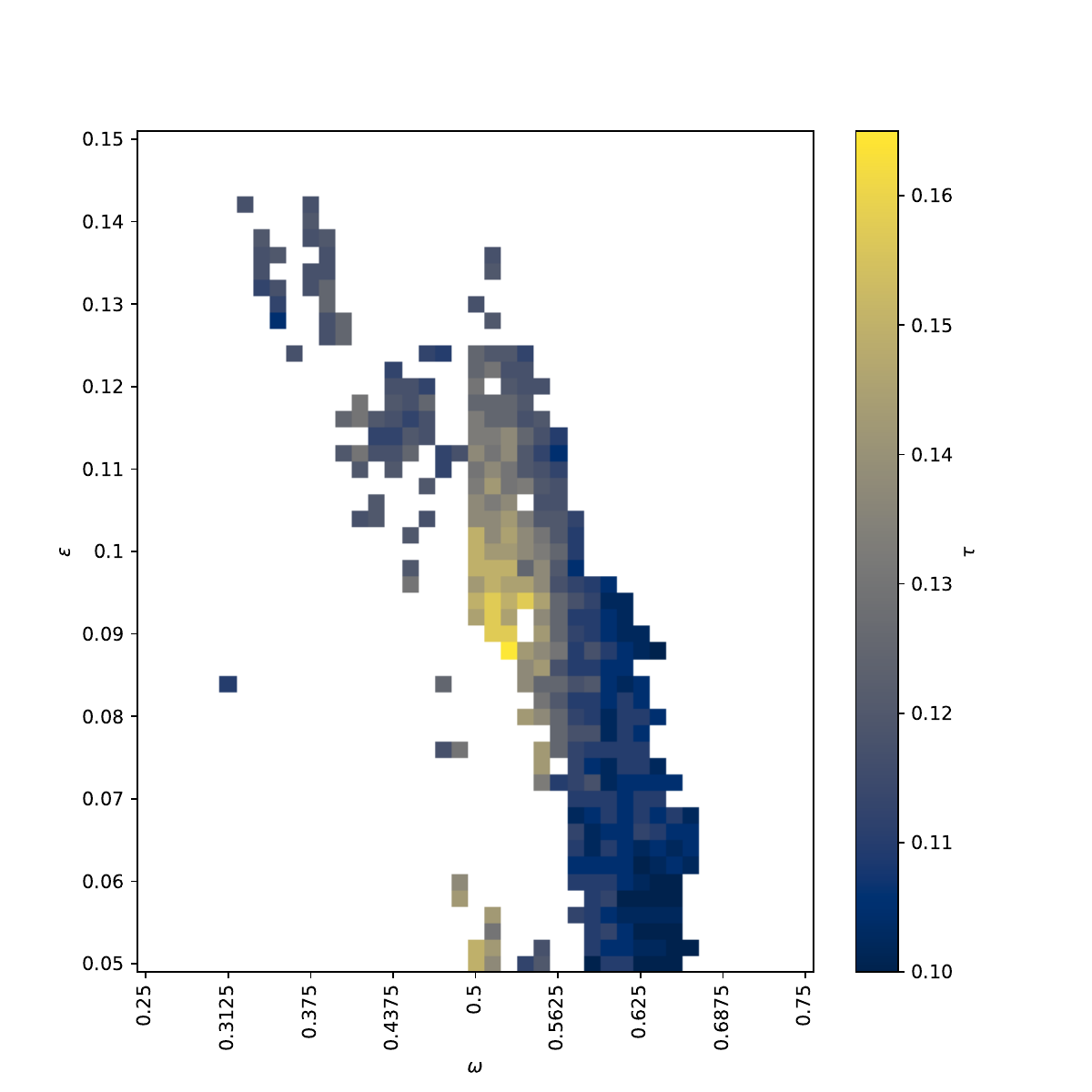}
    \caption{The minimum value of $\tau$ in the filtration for which at least three SCCs composed of $200$ or more cubes are present.}
    \label{fig: tau-value at the moment of SCC emergence on parameter space}
\end{figure}
The color used in this figure indicates the lowest value of $\tau$ that results in the emergence of such components, while the white color indicates parameter pairs that do not admit three SCCs of sufficient size for all values of $\tau$ considered. For the remainder of this section, we shall focus only on those parameter pairs that do admit such components.

One can notice that parameters around $(\omega,\varepsilon) \approx (0.55,0.09)$ yield the highest values of $\tau$, which can be interpreted as strong connections between ruin attractors that correspond to relatively many trajectories running from one attractor ruin to the other.

\subsubsection{Stationary distribution and its variation}
In the next step, we compute an approximation of the stationary distribution $\pi$ for each $\mathcal{F}_{\omega,\varepsilon}$. Then we compute its normalized variation $\NV(\pi, \mathcal{X})$. This quantity for the different parameters is shown in Fig.~\ref{fig: NVpi on parameter space}.

\begin{figure}[htbp]
    \centering
    \includegraphics[width=0.75\linewidth]{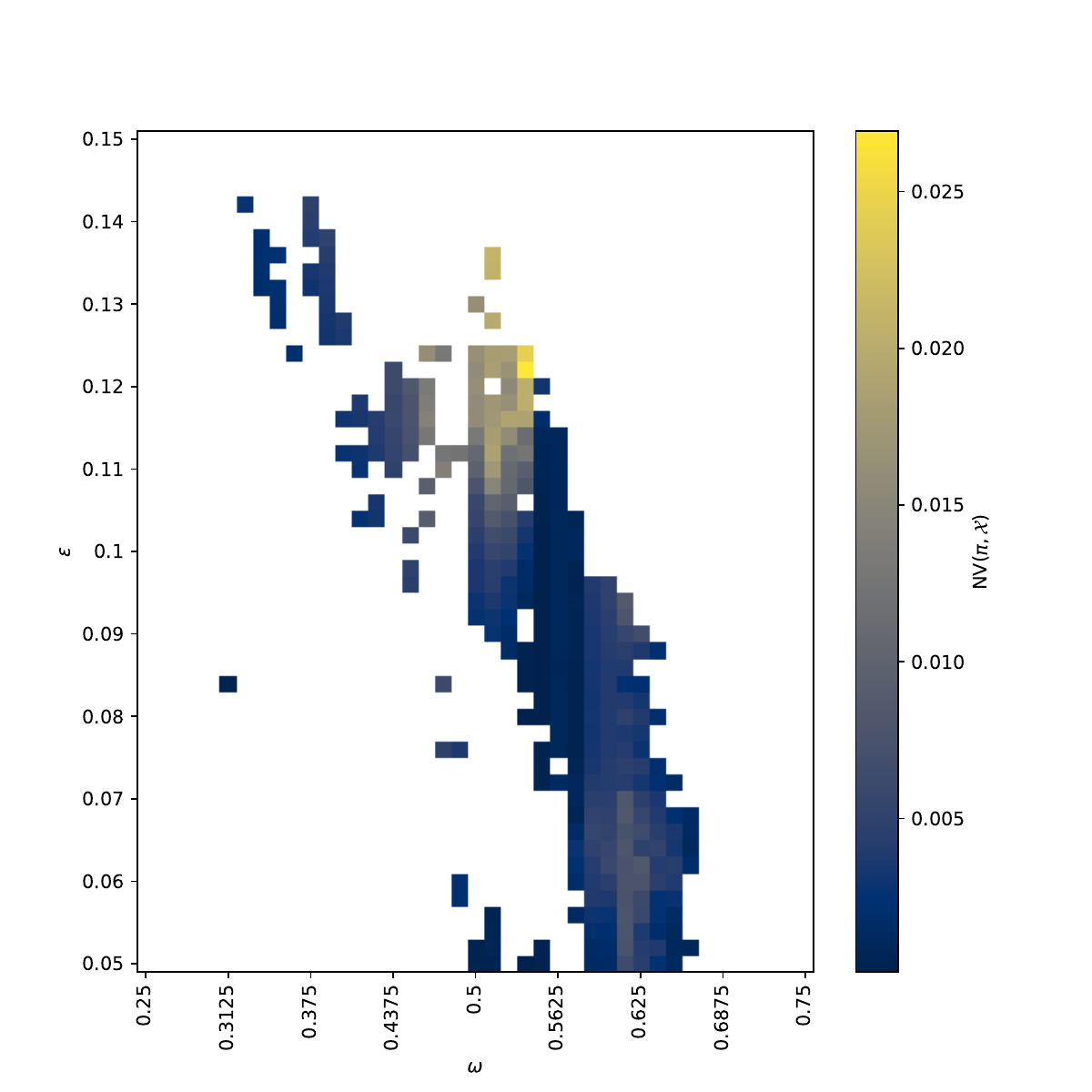}
    \caption{Normalized variation of the computed stationary distribution $\pi$ across those parameter pairs for which at least three SCCs of $200$ or more cubes are present.}
    \label{fig: NVpi on parameter space}
\end{figure}

One can see that the highest values of $\NV(\pi, \mathcal{X})$ can be found for approximately $(\omega, \varepsilon)\in [ 0.5,0.55]\times[0.104,0.126]$. The other pairs of parameters yield considerably lower variation.

\subsubsection{Spectral gap}
Having computed the stationary distribution, we compute the eigenvalues of the transition matrix and its spectral gap. The result is shown in Fig.~\ref{fig: spectral gap on parameter space}.

\begin{figure}[htbp]
    \centering
    \includegraphics[width=0.75\linewidth]{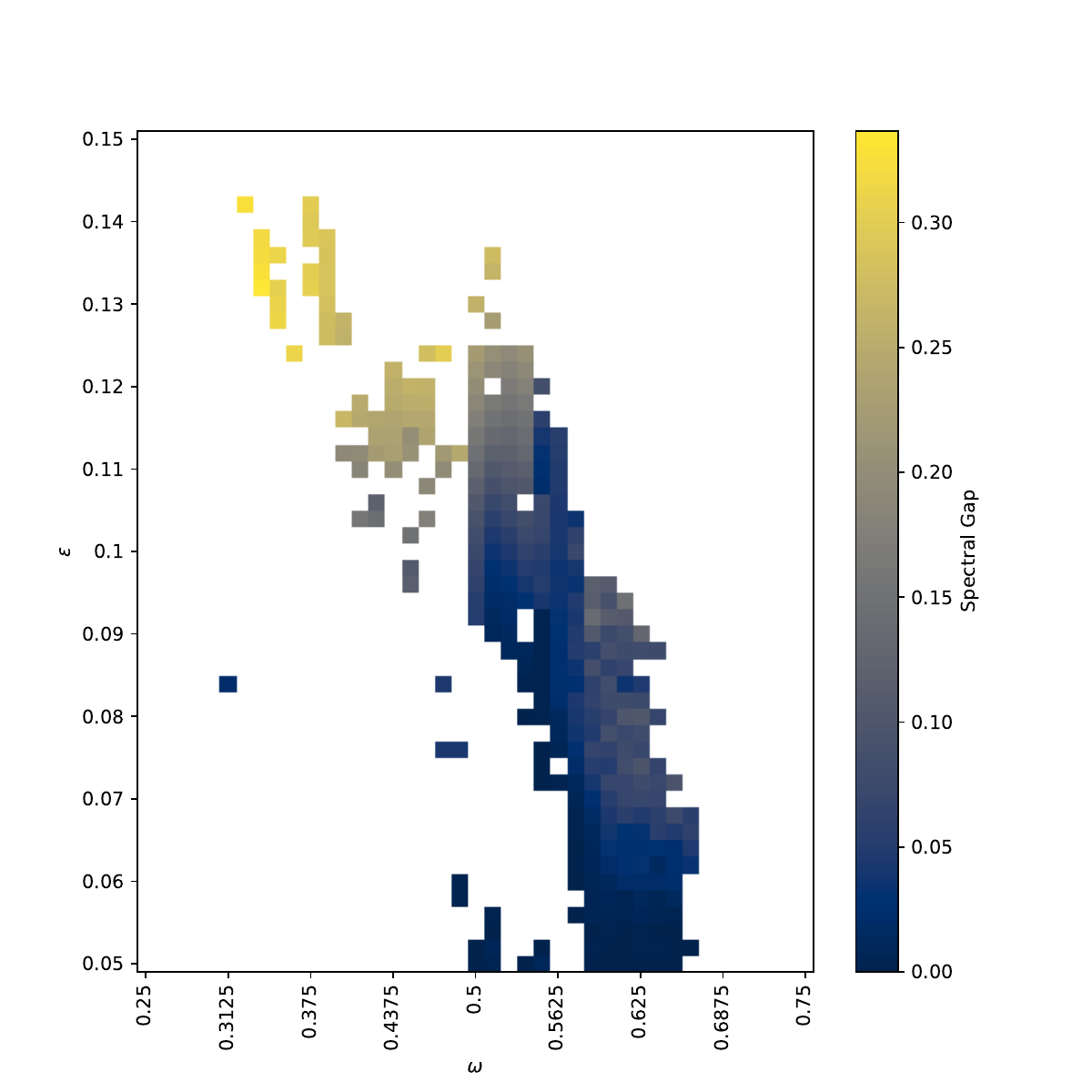}
    \caption{The spectral gap of $\mathcal{F}$ for parameter pairs for which at least three SCCs composed of $200$ or more cubes are present.}
    \label{fig: spectral gap on parameter space}
\end{figure}

One can notice two characteristic regions in the parameter space: values for $(\omega, \varepsilon)$ where $\omega<0.5$ and $\varepsilon > 0.1$ generally yield higher spectral gap than those for $\omega \ge 0.5$.
In the first case, the high value of the spectral gap indicates fast stabilization and the lack of bottleneck clusters in $\mathcal{F}$. On the contrary, the low value of the spectral gap in the second case suggests that for those pairs of parameters, multiple metastable clusters might be present in $\mathcal{F}$.

\subsubsection{Entropy analysis}
Figure \ref{fig: entropy-based index on parameter space} shows the entropy rate encountered across the parameter space $\Lambda$.
\begin{figure}[hbp]
    \centering
    \includegraphics[width=0.75\linewidth]{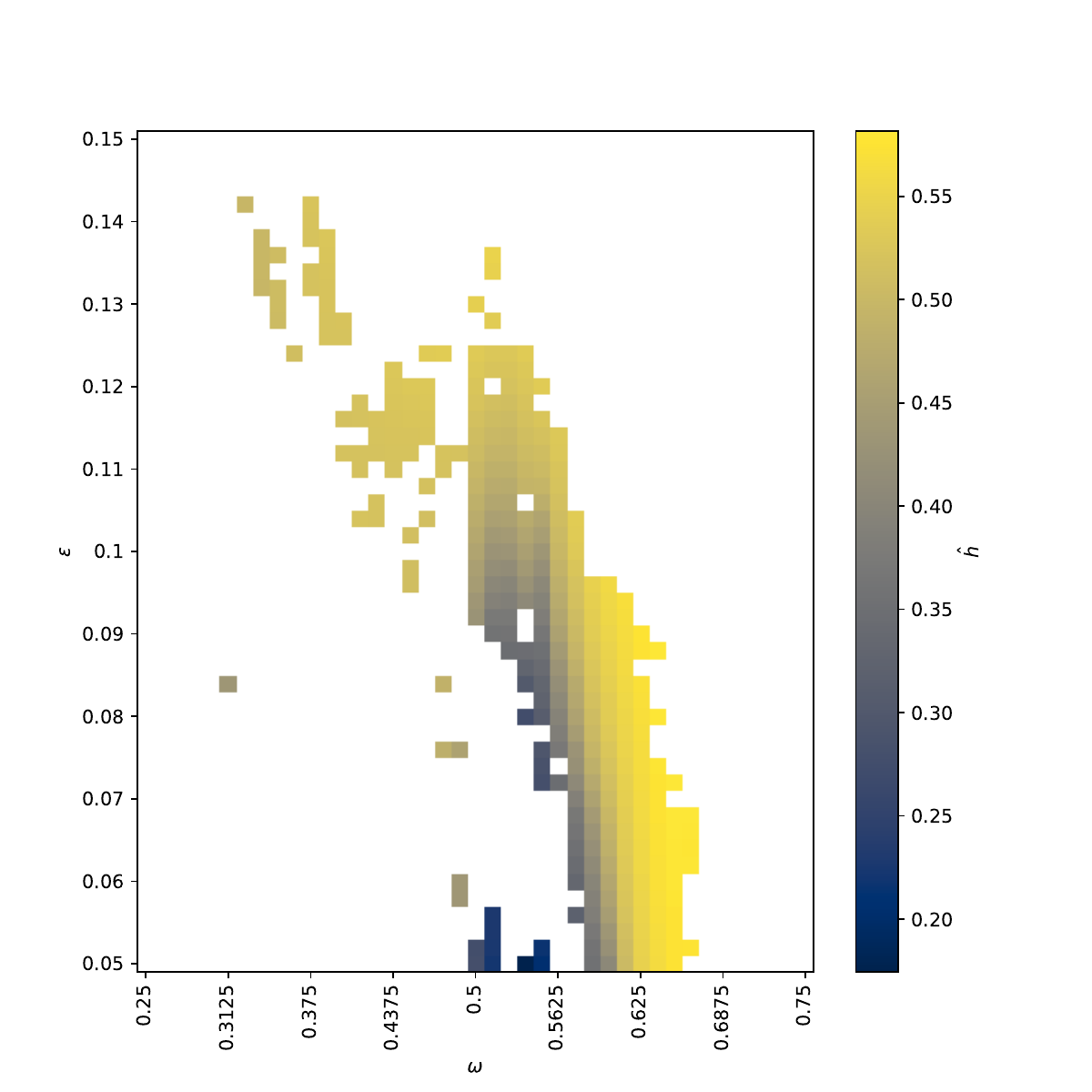}
    \caption{Entropy rate.}
    \label{fig: entropy-based index on parameter space}
\end{figure}
One can distinguish three regions here: values for $(\omega, \varepsilon)$, where $\omega>0.6$, feature the highest entropy rate among all the considered parameter pairs, which together with previous observations suggests that the dynamics here is fully turbulent.
In the region of the considered part of the parameter space where $\varepsilon>0.11$, the entropy rate is still high but noticeably lower than in the first mentioned region. This part of $\Lambda$ essentially coincides with the region in which large spectral gap was found; see in Fig.~\ref{fig: spectral gap on parameter space}.
Entropy rate for pairs in which $\omega$ is approximately in $[0.5, 0.5625]$ and $\varepsilon<0.11$ have a relatively low entropy rate.

In order to compare the entropy rate restricted to SCCs with the global entropy rate, we carried out additional computations. In the majority of the considered parameter region, the entropy rate within SCCs was lower than the global one. It is consistent with the assumption that the dynamics within the attractor ruins should be more ordered. Figures for these results are not shown for brevity.

\subsubsection{Outflow of the SCCs}
The outflow of the SCCs obtained for the minimum value of $\tau$ is shown in Fig. \ref{fig: cond on parameter space}.
\begin{figure}[htbp]
    \centering
    \includegraphics[width=0.75\linewidth]{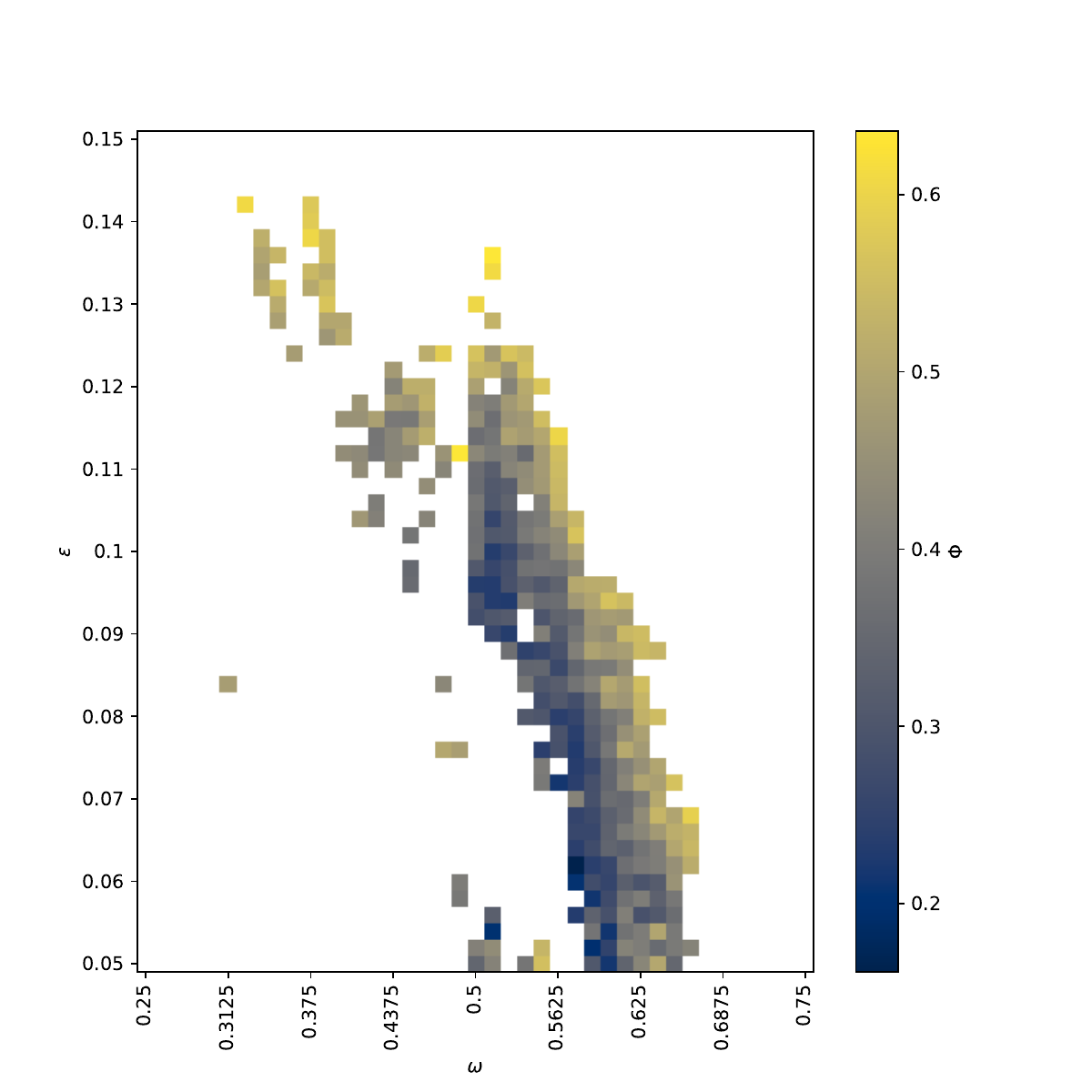}
    \caption{Maximum outflow among all the SCCs computed at their first appearance.}
    \label{fig: cond on parameter space}
\end{figure}
The value of each pixel here indicates the highest outflow among the relevant SCCs that are present in the corresponding graph. It therefore quantifies the instability of potential attractor ruins and and provides an upper bound on their outflow: all other SCCs corresponding to the attactor ruins have an outflow less than or equal to this value. Consequently, lower values indicate that all identified SCCs are better candidates for attractor ruins. 

\subsubsection{Assessment of chaotic itinerancy}
Taking into consideration all the results discussed above, we conclude that we found the most pronounced indications of chaotic itinerancy in the $3$-dimensional CML system \eqref{eq:tsudaCML} for parameters approximately given by $(\omega,\varepsilon)\in[0.5,0.55]\times [0.9,0.12]$. However, we must note that the dynamics of this model can change rapidly under small perturbations of parameters, while our computations were only carried out for a finite set of parameter pairs. Therefore, our results provide a coarse-grained approximate indicator of where chaotic itinerancy can be present across the analyzed parameter ranges.

\section{Conclusion and final remarks}
\label{sec:conclusion}
We have provided an effective computational framework for detecting the phenomenon of chaotic itinerancy in a dynamical system with discrete time on a compact subset of a Euclidean space, and we showed an application of this method to a specific 3-dimensional CML system and to a few other systems as well.

We have focused on a dynamical system given by a continuous map $f \colon X \to X$ defined by a known formula. However, our approach can also be applied to the analysis of a map $f$ given by a finite sample, that is, a finite set of pairs $(x_i,y_i)$ such that $y_i$ is supposedly an approximation of $f(x_i)$. Specifically, instead of using the Monte Carlo-like approach to create a graph representation $\mathcal{F}$ of $f$, we would use the pairs provided. Such a sampled map may come from time series data obtained, for example, in some physical or numerical experiment, or as a result of medical examination of a patient, such as ECG or EEG data. We emphasize the fact that, thanks to working with a filtration $\mathcal{F}_\tau$, our method is automatically resistant to a limited amount of noise, because it ignores assignments (graph edges) that are rarely followed.

Let us now point out some weaknesses of our approach and suggest some methods that could be used to overcome them.

Due to subdividing the phase space into a rectangular grid, our approach becomes inefficient in higher dimensions because the number of cubes to fill a region of a given size grows exponentially with the dimension; this is the so-called \emph{curse of dimensionality}. However, we expect that using coarse grids might yield a method that is also feasible in higher dimensions. Another improvement might focus on restricting the constructed grid to a subset of the phase space containing most of the dynamics; such a subset might be found by tracing a collection of forward trajectories of several initial points and determining a region of the phase space where high density of points is encountered. Moreover, instead of a rectangular grid, one could use a finite collection of balls that cover the global attractor of the system. Since attractors in high-dimensional systems are often low dimensional structures, one should expect to be able to construct a cover for such an attractor that consists of a relatively low number of balls. The ideas of persistent homology \cite{EdeHar2010} might be helpful here. Another idea might be to select a finite number of representative points in the phase space and to construct a Voronoi diagram instead. The different ideas proposed above may perform better in some cases than in others, and we think they are worth further investigation in the future.

\section*{Author contributions (CRediT)}

\textbf{Wojciech Jaworek:} Data curation, Formal analysis, Investigation, Methodology, Software, Validation, Visualization, Writing -- original draft, Writing -- review \& editing.
\textbf{Paweł Pilarczyk:} Conceptualization, Formal analysis, Funding acquisition, Investigation, Methodology, Project administration, Validation, Writing -- review \& editing.

\section*{Acknowledgments}

This research was supported by the National Science Centre, Poland, within the grant OPUS 2021/41/B/ST1/00405.
Computations were carried out using the computers of Centre of Informatics Tricity Academic Supercomputer \& Network.

P.\ Pilarczyk expresses his gratitude to Professor Hiroshi Kokubu (Kyoto University) for introducing him to the topic of chaotic itinerancy.

\section*{Conflict of Interest}
The authors have no conflicts to disclose.

\section*{Data availability statement}

All data shown and discussed in the paper can be obtained from the corresponding author upon reasonable request.

\printbibliography

@ARTICLE{Freeman1987-rm,
  title     = "Simulation of chaotic {EEG} patterns with a dynamic model of the
               olfactory system",
  author    = "Freeman, Walter J",
  journal   = "Biol. Cybern.",
  publisher = "Springer Science and Business Media LLC",
  volume    =  56,
  number    = "2-3",
  pages     = "139--150",
  year      =  1987,
  language  = "en",
   doi      = {10.1007/BF00317988}
}

@ARTICLE{Freeman2003-ch,
  title     = "Evidence from human scalp electroencephalograms of global
               chaotic itinerancy",
  author    = "Freeman, Walter J",
  journal   = "Chaos",
  publisher = "AIP Publishing",
  volume    =  13,
  number    =  3,
  pages     = "1067--1077",
  month     =  sep,
  year      =  2003,
  language  = "en",
  doi       = {10.1063/1.1596553}
}

@ARTICLE{Ikeda1989-zg,
  title     = "{Maxwell-Bloch} Turbulence",
  author    = "Ikeda, Kensuke and Otsuka, Kenju and Matsumoto, Kenji",
  journal   = "Progr. Theoret. Phys. Suppl.",
  publisher = "Oxford University Press (OUP)",
  volume    =  99,
  pages     = "295--324",
  year      =  1989,
  doi       = {10.1143/PTPS.99.295}
}

@ARTICLE{Inoue2020-hg,
  title     = "Designing spontaneous behavioral switching via chaotic
               itinerancy",
  author    = "Inoue, Katsuma and Nakajima, Kohei and Kuniyoshi, Yasuo",
  journal   = "Sci. Adv.",
  publisher = "American Association for the Advancement of Science (AAAS)",
  volume    =  6,
  number    =  46,
  pages     = "eabb3989",
  month     =  nov,
  year      =  2020,
  language  = "en",
  doi       = {10.1126/sciadv.abb3989},
 eprint     = {https://www.science.org/doi/pdf/10.1126/sciadv.abb3989},
 url        = {www.science.org/doi/pdf/10.1126/sciadv.abb3989}
}

@ARTICLE{Kaneko1990-fj,
  title     = "Clustering, coding, switching, hierarchical ordering, and
               control in a network of chaotic elements",
  author    = "Kaneko, Kunihiko",
  journal   = "Physica D",
  publisher = "Elsevier BV",
  volume    =  41,
  number    =  2,
  pages     = "137--172",
  month     =  mar,
  year      =  1990,
  language  = "en",
  url       = "https://doi.org/10.1016/0167-2789(90)90119-A",
  doi       = {10.1016/0167-2789(90)90119-A}
}

@ARTICLE{Kaneko1991-fh,
  title     = "Globally coupled circle maps",
  author    = "Kaneko, Kunihiko",
  journal   = "Physica D",
  publisher = "Elsevier BV",
  volume    =  54,
  number    = "1-2",
  pages     = "5--19",
  month     =  dec,
  year      =  1991,
  language  = "en",
  doi       = {10.1016/0167-2789(91)90103-G}
}

@ARTICLE{Kaneko1993-uy,
  title     = "Chaotic traveling waves in a coupled map lattice",
  author    = "Kaneko, Kunihiko",
  journal   = "Physica D",
  publisher = "Elsevier BV",
  volume    =  68,
  number    = "3-4",
  pages     = "299--317",
  month     =  oct,
  year      =  1993,
  language  = "en",
  doi       = {10.1016/0167-2789(93)90126-L}
}

@ARTICLE{Kaneko2003-oy,
  title     = "Chaotic itinerancy",
  author    = "Kaneko, Kunihiko and Tsuda, Ichiro",
  journal   = "Chaos",
  publisher = "AIP Publishing",
  volume    =  13,
  number    =  3,
  pages     = "926--936",
  month     =  sep,
  year      =  2003,
  language  = "en",
  doi       = {10.1063/1.1607783}
}

@ARTICLE{Kaneko2015-id,
  title     = "From globally coupled maps to complex-systems biology",
  author    = "Kaneko, Kunihiko",
  journal   = "Chaos",
  publisher = "AIP Publishing",
  volume    =  25,
  number    =  9,
  pages     = "097608",
  month     =  sep,
  year      =  2015,
  language  = "en",
  doi       = {10.1063/1.4916925}
}

@ARTICLE{Kobayashi2018-yv,
  title     = "Chaotic itinerancy observed in mutually coupled Gaussian maps",
  author    = "Kobayashi, Mio and Yoshinaga, Tetsuya",
  journal   = "Int. J. Bifurcat. Chaos",
  publisher = "World Scientific Pub Co Pte Ltd",
  volume    =  28,
  number    =  04,
  pages     = "1830011",
  month     =  apr,
  year      =  2018,
  language  = "en",
  doi       = {10.1142/S0218127418300112}
}

@ARTICLE{Milnor1985-xp,
  title     = "On the concept of attractor",
  author    = "Milnor, John",
  journal   = "Commun. Math. Phys.",
  publisher = "Springer Science and Business Media LLC",
  volume    =  99,
  number    =  2,
  pages     = "177--195",
  month     =  jun,
  year      =  1985,
  language  = "en",
  doi       = {10.1007/BF01212280}
}

@ARTICLE{Park2017-dl,
  title     = "Chaotic itinerancy within the coupled dynamics between a
               physical body and neural oscillator networks",
  author    = "Park, Jihoon and Mori, Hiroki and Okuyama, Yuji and Asada,
               Minoru",
  journal   = "PLoS One",
  publisher = "Public Library of Science (PLoS)",
  volume    =  12,
  number    =  8,
  pages     = "e0182518",
  month     =  aug,
  year      =  2017,
  copyright = "http://creativecommons.org/licenses/by/4.0/",
  language  = "en",
  doi = {10.1371/journal.pone.0182518}
}

@ARTICLE{Sauer2003-xk,
  title     = "Chaotic itinerancy based on attractors of one-dimensional maps",
  author    = "Sauer, Timothy",
  journal   = "Chaos",
  publisher = "AIP Publishing",
  volume    =  13,
  number    =  3,
  pages     = "947--952",
  month     =  sep,
  year      =  2003,
  language  = "en",
  doi       = {10.1063/1.1582332}
}

@ARTICLE{Tanaka2005-if,
  title     = "Crisis-induced intermittency in two coupled chaotic maps:
               towards understanding chaotic itinerancy",
  author    = "Tanaka, G and Sanju{\'a}n, M A F and Aihara, K",
  journal   = "Phys. Rev. E Stat. Nonlin. Soft Matter Phys.",
  publisher = "American Physical Society (APS)",
  volume    =  71,
  number    = "1 Pt 2",
  pages     = "016219",
  month     =  jan,
  year      =  2005,
  copyright = "http://link.aps.org/licenses/aps-default-license",
  language  = "en",
  doi       = {10.1103/PhysRevE.71.016219}
}

@ARTICLE{Tsuda1991-pf,
  title     = "Chaotic itinerancy as a dynamical basis of hermeneutics in brain
               and mind",
  author    = "Tsuda, Ichiro",
  journal   = "World Futures",
  publisher = "Informa UK Limited",
  volume    =  32,
  number    = "2-3",
  pages     = "167--184",
  month     =  sep,
  year      =  1991,
  language  = "en",
  doi       = {10.1080/02604027.1991.9972257}
}

@ARTICLE{Tsuda2003-dx,
  title     = "Chaotic itinerancy generated by coupling of Milnor attractors",
  author    = "Tsuda, Ichiro and Umemura, Toshiya",
  journal   = "Chaos",
  publisher = "AIP Publishing",
  volume    =  13,
  number    =  3,
  pages     = "937--946",
  month     =  sep,
  year      =  2003,
  language  = "en",
  doi       = {10.1063/1.1599131}
}

@ARTICLE{Tsuda2004-mk,
  title     = "Chaotic itinerancy as a mechanism of irregular changes between
               synchronization and desynchronization in a neural network",
  author    = "Tsuda, Ichiro and Fujii, Hiroshi and Tadokoro, Satoru and
               Yasuoka, Takuo and Yamaguti, Yutaka",
  journal   = "J. Integr. Neurosci.",
  publisher = "World Scientific Pub Co Pte Lt",
  volume    =  03,
  number    =  02,
  pages     = "159--182",
  month     =  jun,
  year      =  2004,
  doi       = {10.1142/S021963520400049X}
}

@ARTICLE{Yanagita1995-jm,
  title     = "{Rayleigh-B{\'e}nard} convection patterns, chaos, spatiotemporal
               chaos and turbulence",
  author    = "Yanagita, Tatsuo and Kaneko, Kunihiko",
  journal   = "Physica D",
  publisher = "Elsevier BV",
  volume    =  82,
  number    =  3,
  pages     = "288--313",
  month     =  apr,
  year      =  1995,
  language  = "en",
  doi       = {10.1016/0167-2789(94)00233-G}
}

@BOOK{Hirsch2012,
  title     = "Differential Equations, Dynamical Systems, and an Introduction
               to Chaos",
  author    = "Hirsch, Morris W and Smale, Stephen and Devaney, Robert L",
  publisher = "Academic Press",
  edition   =  3,
  month     =  mar,
  year      =  2012,
  address   = "San Diego, CA",
  language  = "en",
  doi       = {10.1016/C2009-0-61160-0},
  isbn={9780123820112},
  lccn={2012002951}
}

@article{Arai2009,
author = {Arai, Zin and Kalies, William and Kokubu, Hiroshi and Mischaikow, Konstantin and Oka, Hiroe and Pilarczyk, Pawel},
title = {A Database Schema for the Analysis of Global Dynamics of Multiparameter Systems},
journal = {SIAM Journal on Applied Dynamical Systems},
volume = {8},
number = {3},
pages = {757-789},
year = {2009},
doi = {10.1137/080734935},
URL = {https://doi.org/10.1137/080734935}
}

@article{Pilarczyk2023,
    author = {Pilarczyk, Paweł and Signerska-Rynkowska, Justyna and Graff, Grzegorz},
    title = "{Topological-numerical analysis of a two-dimensional discrete neuron model}",
    journal = {Chaos: An Interdisciplinary Journal of Nonlinear Science},
    volume = {33},
    number = {4},
    pages = {043110},
    year = {2023},
    month = apr,
    issn = {1054-1500},
    doi = {10.1063/5.0129859},
    url = {https://doi.org/10.1063/5.0129859},
    eprint = {https://pubs.aip.org/aip/cha/article-pdf/doi/10.1063/5.0129859/16822771/043110\_1\_5.0129859.pdf},
}

@article{Kapela2021,
title = {CAPD::DynSys: A flexible C++ toolbox for rigorous numerical analysis of dynamical systems},
journal = {Communications in Nonlinear Science and Numerical Simulation},
volume = {101},
pages = {105578},
year = {2021},
issn = {1007-5704},
doi = {10.1016/j.cnsns.2020.105578},
url = {https://www.sciencedirect.com/science/article/pii/S1007570420304081},
author = {Tomasz Kapela and Marian Mrozek and Daniel Wilczak and Piotr Zgliczyński},
}

@article{Koscielniak2014,
title = {Shadowing is generic---a continuous map case},
journal = {Discrete and Continuous Dynamical Systems},
volume = {34},
number = {9},
pages = {3591-3609},
year = {2014},
issn = {1078-0947},
doi = {10.3934/dcds.2014.34.3591},
url = {https://www.aimsciences.org/article/id/d7cee083-559f-46cf-aefa-eee5ec88cc47},
author = {Piotr Kościelniak and Marcin Mazur and Piotr Oprocha and Paweł Pilarczyk},
}

@ARTICLE{Luzzatto2011-us,
  title     = "Finite resolution dynamics",
  author    = "Luzzatto, Stefano and Pilarczyk, Pawe{\l}",
  journal   = "Found. Comput. Math.",
  publisher = "Springer Science and Business Media LLC",
  volume    =  11,
  number    =  2,
  pages     = "211--239",
  month     =  apr,
  year      =  2011,
  language  = "en",
  doi       = {10.1007/s10208-010-9083-z}
}

@article{miyaji-2016,
title = {A study of rigorous ODE integrators for multi-scale set-oriented computations},
journal = {Applied Numerical Mathematics},
volume = {107},
pages = {34-47},
year = {2016},
issn = {0168-9274},
doi = {https://doi.org/10.1016/j.apnum.2016.04.005},
url = {https://www.sciencedirect.com/science/article/pii/S0168927416300435},
author = {Tomoyuki Miyaji and Paweł Pilarczyk and Marcio Gameiro and Hiroshi Kokubu and Konstantin Mischaikow}
}

@article{Pilarczyk-2012,
doi = {10.1088/1751-8113/45/12/125502},
url = {https://doi.org/10.1088/1751-8113/45/12/125502},
year = {2012},
month = mar,
publisher = {IOP Publishing},
volume = {45},
number = {12},
pages = {125502},
author = {Pilarczyk, Paweł and García, Luis and Carreras, Benjamin A and Llerena, Irene},
title = {A dynamical model for plasma confinement transitions},
journal = {Journal of Physics A: Mathematical and Theoretical}
}

@techreport{Page1999,
          number = {1999-66},
           month = nov,
          author = {Lawrence Page and Sergey Brin and Rajeev Motwani and Terry Winograd},
            note = {Previous number = SIDL-WP-1999-0120},
           title = {The PageRank Citation Ranking: Bringing Order to the Web.},
            type = {Technical Report},
       publisher = {Stanford InfoLab},
            year = {1999},
     institution = {Stanford InfoLab},
             url = {http://ilpubs.stanford.edu:8090/422/}
}

@book{Cover2006,
  author = {Cover, Thomas M. and Thomas, Joy A.},
  howpublished = {Hardcover},
  id = {1877660},
  isbn = {0471241954},
  month = jul,
  publisher = {Wiley-Interscience},
  title = {Elements of Information Theory 2nd Edition (Wiley Series in Telecommunications and Signal Processing)},
  year = "2006",

}

@article{Mierski2025,
  title = {Analysis of the chaotic itinerancy phenomenon using entropy and clustering},
  author = {Mierski, Nikodem and Pilarczyk, Pawe\l{}},
  journal = {Phys. Rev. E},
  volume = {112},
  issue = {5},
  pages = {054223},
  numpages = {16},
  year = {2025},
  month = nov,
  publisher = {American Physical Society},
  doi = {10.1103/bcwd-nx7r},
  url = {https://link.aps.org/doi/10.1103/bcwd-nx7r}
}

@article{Li_Yorke,
author = {Tien-Yien Li and James A. Yorke},
title = {Period Three Implies Chaos},
journal = {The American Mathematical Monthly},
volume = {82},
number = {10},
pages = {985--992},
year = {1975},
publisher = {Taylor \& Francis},
doi = {10.1080/00029890.1975.11994008},
}

@inproceedings{Jerrum1988,
author = {Jerrum, Mark and Sinclair, Alistair},
title = {Conductance and the rapid mixing property for Markov chains: the approximation of permanent resolved},
year = {1988},
isbn = {0897912640},
publisher = {Association for Computing Machinery},
address = {New York, NY, USA},
url = {https://doi.org/10.1145/62212.62234},
doi = {10.1145/62212.62234},
booktitle = {Proceedings of the Twentieth Annual ACM Symposium on Theory of Computing},
pages = {235--244},
numpages = {10},
location = {Chicago, Illinois, USA},
series = {STOC '88}
}

@article{Kannan2004,
author = {Kannan, Ravi and Vempala, Santosh and Vetta, Adrian},
title = {On clusterings: Good, bad and spectral},
year = {2004},
issue_date = {May 2004},
publisher = {Association for Computing Machinery},
address = {New York, NY, USA},
volume = {51},
number = {3},
issn = {0004-5411},
url = {https://doi.org/10.1145/990308.990313},
doi = {10.1145/990308.990313},
journal = {J. ACM},
month = may,
pages = {497--515},
numpages = {19}
}

@book{EdeHar2010,
  author = {Edelsbrunner, Herbert and Harer, John},
  isbn = {978-0-8218-4925-5},
  pages = {I-XII, 1-241},
  publisher = {American Mathematical Society},
  title = {Computational Topology - an Introduction},
  year = 2010
}

@book{Haggstrom_2002, 
place={Cambridge}, 
series={London Mathematical Society Student Texts}, 
title={Finite Markov Chains and Algorithmic Applications}, 
publisher={Cambridge University Press}, 
author={Häggström, Olle}, 
year={2002}, 
doi={10.1017/CBO9780511613586},
collection={London Mathematical Society Student Texts}}

@ARTICLE{2020SciPy-NMeth,
  author  = {Virtanen, Pauli and Gommers, Ralf and Oliphant, Travis E. and
            Haberland, Matt and Reddy, Tyler and Cournapeau, David and
            Burovski, Evgeni and Peterson, Pearu and Weckesser, Warren and
            Bright, Jonathan and {van der Walt}, St{\'e}fan J. and
            Brett, Matthew and Wilson, Joshua and Millman, K. Jarrod and
            Mayorov, Nikolay and Nelson, Andrew R. J. and Jones, Eric and
            Kern, Robert and Larson, Eric and Carey, C J and
            Polat, {\.I}lhan and Feng, Yu and Moore, Eric W. and
            {VanderPlas}, Jake and Laxalde, Denis and Perktold, Josef and
            Cimrman, Robert and Henriksen, Ian and Quintero, E. A. and
            Harris, Charles R. and Archibald, Anne M. and
            Ribeiro, Ant{\^o}nio H. and Pedregosa, Fabian and
            {van Mulbregt}, Paul and {SciPy 1.0 Contributors}},
  title   = {{{SciPy} 1.0: Fundamental Algorithms for Scientific
            Computing in Python}},
  journal = {Nature Methods},
  year    = {2020},
  volume  = {17},
  pages   = {261--272},
  adsurl  = {https://rdcu.be/b08Wh},
  doi     = {10.1038/s41592-019-0686-2},
}

@article{Seabrook2023,
    author = {Seabrook, Eddie and Wiskott, Laurenz},
    title = {A Tutorial on the Spectral Theory of Markov Chains},
    journal = {Neural Computation},
    volume = {35},
    number = {11},
    pages = {1713-1796},
    year = {2023},
    month = {10},
    issn = {0899-7667},
    doi = {10.1162/neco_a_01611},
    url = {https://doi.org/10.1162/neco_a_01611},
    eprint = {https://direct.mit.edu/neco/article-pdf/35/11/1713/2335627/neco_a_01611.pdf},
}

@article{Falecki2026,
  title = {Topological–numerical analysis of global dynamics in the discrete-time two-gene Andrecut–Kauffman model},
  volume = {36},
  ISSN = {1089-7682},
  url = {http://dx.doi.org/10.1063/5.0333093},
  DOI = {10.1063/5.0333093},
  number = {7},
  journal = {Chaos: An Interdisciplinary Journal of Nonlinear Science},
  publisher = {AIP Publishing},
  author = {Falęcki,  Dorian and Rosman,  Mikołaj and Palczewski,  Michał and Pilarczyk,  Paweł and Bartłomiejczyk,  Agnieszka},
  year = {2026},
  pages = {073150},
  month = Jul
}

@article{Leonov1996,
  title = {On the total variation for functions of several variables and a multidimensional analog of Helly’s selection principle},
  volume = {63},
  ISSN = {1573-8876},
  DOI = {10.1007/bf02316144},
  number = {1},
  journal = {Mathematical Notes},
  publisher = {Springer Science and Business Media LLC},
  author = {Leonov,  A. S.},
  year = {1996},
  month = Jan,
  pages = {61--71}
}

@article{Tsuda2013,
  title = {Chaotic itinerancy},
  volume = {8},
  ISSN = {1941-6016},
  url = {http://dx.doi.org/10.4249/scholarpedia.4459},
  DOI = {10.4249/scholarpedia.4459},
  number = {1},
  journal = {Scholarpedia},
  publisher = {Scholarpedia},
  author = {Tsuda,  Ichiro},
  year = {2013},
  pages = {4459}
}

\end{document}